\pdfoutput=1 
\documentclass[a4paper,USenglish,cleveref,autoref,thm-restate,nolinenumbers]{lipics-v2021}

\hideLIPIcs  

\usepackage{mathtools}
\usepackage{algorithm}
\usepackage{algpseudocode}

\newtheorem*{definition*}{Definition}

\usepackage{tikz}
\usetikzlibrary{shapes, positioning, fit, backgrounds, decorations.pathreplacing, calc, arrows.meta}

\newcommand{\out}{\mathsf{out}}
\newcommand{\elect}{\mathsf{elect}}
\newcommand{\op}{\mathsf{op}}

\newcommand{\miss}{\mathrm{miss}}
\newcommand{\opfresh}{\mathrm{new}} 

\newcommand{\cons}{\mathrm{cons}}
\newcommand{\leader}{\mathrm{leader}}
\newcommand{\lin}{\mathrm{lin}}
\newcommand{\seqcon}{\mathrm{sc}}

\title{Lattice Aggregation in Distributed Verification under Crash and Byzantine Failures}

\author{Gilde Valeria Rodríguez}{Posgrado en Ciencia e Ingeniería de la Computación, Universidad Nacional Autónoma de México, Mexico}{gilde@ciencias.unam.mx}{https://orcid.org/0009-0009-1463-7786}{}

\author{Borzoo Bonakdarpour}{Department of Computer Science and Engineering, Michigan State University, USA}{borzoo@msu.edu}{https://orcid.org/0000-0003-1800-5419}{}

\author{Armando Castañeda}{Instituto de Matemáticas, Universidad Nacional Autónoma de México, Mexico}{armando.castaneda@im.unam.mx}{https://orcid.org/0000-0002-8017-8639}{}

\author{Sergio Rajsbaum}{Instituto de Matemáticas, Universidad Nacional Autónoma de México, Mexico}{rajsbaum@im.unam.mx}{https://orcid.org/0000-0002-0009-5287}{}

\authorrunning{G. V. Rodríguez, A. Castañeda, B. Bonakdarpour, and S. Rajsbaum}
\Copyright{Gilde Valeria Rodríguez, Armando Castañeda, Borzoo Bonakdarpour, and Sergio Rajsbaum}

\ccsdesc{Theory of computation~Distributed computing models}
\ccsdesc{Theory of computation~Distributed algorithms}
\keywords{Distributed verification, Lattice Agreement, Lattice Aggregation, Byzantine failures, Crash failures, Distributed languages, Soundness}

\nolinenumbers

\begin{document}

\maketitle
\begin{abstract}

    We introduce \emph{$c$-Lattice Aggregation}, a fault-tolerant 
reconstruction problem for distributed verification under crash 
and Byzantine failures. In our setting, $n$ asynchronous processes 
supervise a concurrent execution $I \subseteq \mathcal{V}$: each 
process $M_i$ holds a local sample $s_i \subsetneq I$, and must 
collaboratively reconstruct $I$ from partial, potentially overlapping 
observations. A protocol solves $c$-Lattice Aggregation if at least 
$c$ correct processes output the complete execution $I$, while all 
correct outputs are comparable and bounded by $I$. This strengthens Lattice Agreement~\cite{attiya_atomic_1995} and Byzantine Lattice Agreement~\cite{DAQ20,ZG20}.

We parameterize inputs by a redundancy parameter $x$ --- every 
element of $I$ appears in at least $x$ initial samples --- and 
establish tight feasibility thresholds. Under crash failures with 
at most $t$ faulty processes, Lattice Aggregation is solvable if and 
only if $x \geq t+1$. Under Byzantine failures with $t < n/3$, 
$c$-Lattice Aggregation is solvable if and only if $x \geq 2t+c$. 
All bounds are tight: we present matching algorithms based on 
SCD-broadcast~\cite{IMPR18,KMM24} and indistinguishability-based 
lower bounds.

Finally, we define \emph{globally dependent languages} --- those 
for which no partial view can certify correctness, including 
consensus, linearizability, $k$-set agreement, and leader election 
--- and prove that soundness of any monitoring system is achievable 
if and only if $c$-Lattice Aggregation is solved, yielding the first 
complete characterization of fault-tolerant verification under 
Byzantine failures.
\end{abstract}

\tableofcontents

\section{Introduction}

Modern distributed systems increasingly rely on \emph{distributed 
verification}: a set of untrusted processes must collectively certify 
the correctness of a concurrent execution, despite the possibility of 
failures and adversarial behavior. This challenge is highly visible in 
distributed tracing platforms such as Jaeger~\cite{jaeger} and 
Zipkin~\cite{zipkin}, where independent microservices each hold only a 
partial fragment of a global execution trace, and a set of verifiers 
must reconstruct the complete history to issue a verdict. The same 
structure recurs across blockchains~\cite{nakamoto2008bitcoin}, 
distributed databases~\cite{corbett2013spanner}, and electronic voting 
systems~\cite{adida2008helios}: in each case, the underlying 
data---transactions, logs, or ballots---is cryptographically 
authenticated, while the processes responsible for verification are 
untrusted actors that may fail or behave arbitrarily.

The central difficulty is not data authenticity but \emph{soundness}: 
a definitive global verdict must never contradict the ground truth of 
the actual execution~\cite{BFRR22,CR26_JACM,FRT20,gmb24,mb15}. Soundness 
is a non-trivial guarantee because each verifier holds only a 
\emph{partial sample} $s_i \subsetneq I$ of the complete execution 
$I \subseteq \mathcal{V}$, and must arrive at a verdict collaboratively 
with processes that may withhold, fabricate, or selectively report 
observations.\\

The fragility of partial observation becomes critical in the presence 
of faults. If a particular element $u \in I$ is witnessed by only a 
single correct process, its truthful report is structurally 
indistinguishable from a Byzantine process's fabricated claim about a 
ghost element $w \notin I$: both arrive as singleton, uncorroborated 
assertions. The system faces a fundamental dilemma---it must either 
accept $w$ and incorporate a fabricated element, or reject $u$ and 
produce an incomplete verdict. In either case, soundness fails.
This observation leads to the following question:

\begin{center}
\emph{How much redundancy in the initial observations is necessary 
and sufficient to guarantee that the complete execution $I$ can be 
faithfully reconstructed, despite up to $t$ faulty processes?}
\end{center}

We formalize this challenge as the \emph{$c$-Lattice Aggregation} 
problem, this problem is a variant of Lattice Agreement~\cite{attiya_atomic_1995}. 
A set of $n$ asynchronous 
processes supervise an execution $I \subseteq \mathcal{V}$; each 
process $i$ holds a local sample $s_i \subsetneq I$. We parameterize 
the information structure by a \emph{redundancy parameter} $x$: every 
element of $I$ appears in at least $x$ initial samples. A protocol 
solves $c$-Lattice Aggregation if at least $c$ correct processes 
output the complete execution $I$, while all correct outputs are 
comparable and bounded by $I$---strengthening Lattice Agreement~\cite{attiya_atomic_1995} and Byzantine Lattice 
Agreement~\cite{DAQ20,ZG20}, which permits correct processes to terminate 
with a strict subset of $I$.

We assume that initial samples are \emph{authenticated}: while 
processes may be Byzantine, they cannot forge the local states they 
report. This separation of concerns reflects the trust model of 
practical verification architectures:
\begin{itemize}
  \item \textbf{Blockchains:} Transaction records are signed by users; 
  verifying processes may collude or omit data, but cannot forge valid 
  signatures~\cite{wood2014ethereum}.
  \item \textbf{Database auditing:} Committed logs carry cryptographic 
  hashes; independent auditors from competing organizations verify 
  consistency without being able to alter history~\cite{corbett2013spanner}.
  \item \textbf{Optimistic rollups:} Execution traces are 
  deterministic; off-chain challengers verify correctness while 
  potentially acting in self-interest~\cite{kalodner2018arbitrum}.
\end{itemize}

\noindent\textbf{Results.}
We characterize the solvability of $c$-Lattice Aggregation in 
asynchronous message-passing systems through tight input redundancy 
thresholds, expressed in terms of the overlap parameter $x$ and the 
fault bound $t$.

\begin{enumerate}

  \item \textbf{Crash threshold} (Theorem~\ref{thm:crash-unified}).
  Under crash failures the central challenge is \emph{informational 
  completeness}: processes must collect enough of the execution to 
  transition from a provisional state to a definitive conclusion, but 
  asynchrony alone cannot prevent this~\cite{attiya_atomic_1995}. 
  We show that, for any value of $c$, $c$-Lattice Aggregation si solvable if and only if $x \geq t+1$.

  \item \textbf{Byzantine threshold} (Theorem~\ref{thm:byzantine-unified}).
  Under Byzantine failures the problem strictly harder: processes must 
  additionally distinguish genuine local states from ghost states 
  fabricated by a malicious adversary. The challenge expands from 
  completeness to \emph{authenticity}. We show that $c$-Lattice 
  Aggregation is solvable if and only if $x \ge 2t+c$. When $c$ equals 
  the total number of correct processes, the threshold becomes 
  $x \ge 3t+1$.

  \item \textbf{Verdict soundness for globally dependent languages.}
  We define \emph{globally dependent languages}---a class including 
  Consensus, Linearizability, $k$-Set Agreement, and Leader 
  Election---as those for which no partial view can certify legality. 
  We prove that $(t+1)$-Lattice Aggregation is strictly necessary to 
  guarantee sound definitive verdicts for this class under Byzantine 
  failures, establishing our tight bounds as essential for correctness.

\end{enumerate}

All bounds are tight: algorithms are based on 
SCD-broadcast~\cite{IMPR18,KMM24} and lower bounds use 
indistinguishability arguments. Overall, this work presents the first study 
of distributed verification at \textit{runtime} under Byzantine failures.

\noindent\textbf{Paper organization.}
Section~\ref{sec:model} presents the system model and broadcast primitives. 
Section~\ref{sec:problem} defines the Lattice Aggregation problem. Section~\ref{sec:crash} establishes 
the tight thresholds for crash failures. Section~\ref{sec:byzantine-unified} presents the unified $c$-Aggregation framework 
for Byzantine environments and establishes the tight bounds $x \ge 2t+c$. Section~\ref{sec:soundness} formalizes distributed languages, 
proves the necessity of full aggregation for sound verdicts. Section~\ref{sec:related} discusses related work.

\section{System Model}
\label{sec:model}

We consider a distributed system consisting of $n$ processes
$\mathcal{M} = \{M_1, \ldots, M_n\}$ arranged in a \emph{completely connected}
topology: every process can send messages directly to every other process.
The underlying communication channels are \emph{reliable}; messages are
neither lost nor corrupted in transit.
We make no timing assumptions: message delays and local computation
times are unbounded, and a global adversarial scheduler determines the order
in which messages are delivered. 

We consider two failure models independently.

A process $M_i$ suffers a \emph{crash failure} if it halts prematurely and
sends no further messages. A crashed process cannot recover. We denote by
$t$ an upper bound on the number of crash failures in any execution, and
by $F \subseteq \mathcal{M}$ the set of processes that actually fail, so
$|F| \le t$. A process is \emph{correct} if it does not crash; the set of
correct processes in an execution is 
$\mathcal{M} \setminus F$. Finally,  we assume that $t < n/2$.

A process $M_i$ suffers a \emph{Byzantine failure} if it deviates arbitrarily
from the protocol: it may send different messages to different processes,
equivocate, selectively suppress information, or remain silent. We again use
$t$ to bound the number of faulty processes and $F$ to denote the actual
faulty set, with $|F| \le t$. Byzantine processes are strictly more powerful
than crashed ones; in particular, a crash failure is a special case of a
Byzantine failure. A process is \emph{correct} if it is not Byzantine.
Throughout the Byzantine analysis we assume $t < n/3$, which is the
tight necessary condition for the communication primitive defined below.

Both protocols leverage a \emph{Set-Constrained Delivery broadcast} 
(SCD-broadcast)~\cite{IMPR18, KMM24}. In classical message-passing networks, 
primitive broadcasts deliver sequence-ordered individual messages. In contrast, 
SCD-broadcast is functionally equivalent to an asynchronous, single-writer multi-reader 
\emph{atomic snapshot object}~\cite{AADGMS93}. Under this formulation, 
invoking a broadcast acts as an \emph{update} operation of a process's local 
knowledge, while delivering a set of messages behaves like a \emph{scan} that
returns a consistent global snapshot. This structural equivalence ensures that
 the knowledge accumulated by different processes grows monotonically and remains
  strictly comparable across the network, satisfying the core comparability 
  requirements of distributed verification.

\begin{definition}[SCD-Broadcast]
\label{def:scd-broadcast}
We say that a process $M_i$ \emph{scd-broadcasts} a message $m$ when
$M_i$ invokes $\text{scd\_broadcast}(m)$. We say that $M_i$
\emph{scd-delivers} a message $m$ from $M_j$ when
$\text{scd\_deliver}()$ returns a set $mset$ containing the pair
$\langle m, j \rangle$. SCD-broadcast satisfies the following
properties:

\begin{itemize}
    \item \textbf{Validity.} If a correct process scd-delivers
    $\langle m, j \rangle$ and $M_j$ is correct, then $M_j$ has
    scd-broadcast $m$.

    \item \textbf{Integrity.} Each pair $\langle m, j \rangle$ is
    scd-delivered at most once by each correct process.

    \item \textbf{Termination.} If a correct process $M_i$
    scd-broadcasts a message $m$, then every correct process
    eventually scd-delivers $\langle m, i \rangle$.

    \item \textbf{Consistency.} If a correct process scd-delivers
    $\langle m, j \rangle$, then every correct process eventually
    scd-delivers $\langle m, j \rangle$, regardless of whether
    $M_j$ is correct or Byzantine.

    \item \textbf{Order.} If a correct process scd-delivers
    $\langle m, j \rangle$ before $\langle m', j' \rangle$, then
    no correct process scd-delivers $\langle m', j' \rangle$ before
    $\langle m, j \rangle$.
\end{itemize}
\end{definition}

Under crash failures~\cite{IMPR18}, this is solvable for any
$t < n/2$; Consistency reduces to propagation from correct
senders only, and Order reduces to per-sender comparability.
 Under Byzantine failures~\cite{KMM24}, the full
definition requires $t < n/3$.

\section{Lattice Aggregation}
\label{sec:problem}

Let $\mathcal{V}$ be an infinite universe of possible elements. In each
execution, every process $M_i$ receives a finite, non-empty input
subset $s_i \subset \mathcal{V}$, which we refer to as its \emph{initial sample}. 
We say that an element $v \in \mathcal{V}$ is \emph{witnessed} by $M_i$ if and 
only if $v \in s_i$. 

The \textbf{target execution} $I$ is the set of all elements witnessed by 
at least one correct or faulty process in the current execution:
$I \;=\; \bigcup_{i=1}^{n} s_i$.

Initially, no process knows $I$. Each process $M_i$ only knows its own 
sample $s_i \subseteq I$; the elements in $I \setminus s_i$ exist in the 
execution but are invisible to $M_i$ until learned through communication, . 
The collection of all initial samples is denoted as the input assignment 
$E = \{s_1, \dots, s_n\}$.

\vspace{5mm}

\begin{figure}[ht]
\centering

\begin{tikzpicture}[scale=0.85]

\fill[gray!8] (0,0) ellipse (5.5cm and 3.5cm);
\draw[dashed, gray!45, line width=0.8pt]
    (0,0) ellipse (5.5cm and 3.5cm);
\node[gray!55, font=\large\bfseries] at (4.3, 2.8) {$\mathcal{V}$};

\fill[white] (0,0) ellipse (3.1cm and 1.85cm);

\begin{scope}
  \clip (0,0) ellipse (3.1cm and 1.85cm);
  \fill[blue!30, opacity=0.72]   (-1.15, 0) circle (2.0cm);
  \fill[teal!30, opacity=0.72]   ( 0,    0) circle (2.0cm);
  \fill[purple!30, opacity=0.72] ( 1.15, 0) circle (2.0cm);
  \draw[blue!65,         line width=0.7pt] (-1.15, 0) circle (2.0cm);
  \draw[teal!65!black,   line width=0.7pt] ( 0,    0) circle (2.0cm);
  \draw[purple!65,       line width=0.7pt] ( 1.15, 0) circle (2.0cm);
\end{scope}

\draw[black!75, line width=1.2pt] (0,0) ellipse (3.1cm and 1.85cm);
\node[black!75, font=\bfseries] at (3.5, 1.4) {$I$};
\draw[gray!40, dashed, line width=0.5pt] (3.25, 1.2) -- (2.75, 0.8);

\node[blue!70!black, font=\bfseries]   at (-3.6, 1.9)  {$s_1$};
\draw[blue!45,   dashed, line width=0.5pt] (-3.3, 1.7) -- (-2.3, 0.6);

\node[teal!70!black, font=\bfseries]   at ( 0,   2.5)  {$s_2$};
\draw[teal!45,   dashed, line width=0.5pt] ( 0,   2.3) -- ( 0,   1.85);

\node[purple!70!black, font=\bfseries] at ( 3.6, 1.9)  {$s_3$};
\draw[purple!45, dashed, line width=0.5pt] ( 3.3, 1.7) -- ( 2.3, 0.6);

\filldraw[blue!70!black]   (-1.9,  0.6) circle (1.8pt); 
\filldraw[blue!70!black]   (-1.9, -0.5) circle (1.8pt); 
\filldraw[black!55]        (-0.6,  0.7) circle (1.8pt); 
\filldraw[teal!65!black]   ( 0.0,  1.4) circle (1.8pt); 
\filldraw[teal!65!black]   ( 0.2, -0.9) circle (1.8pt); 
\filldraw[black!55]        ( 0.8,  0.6) circle (1.8pt); 
\filldraw[purple!70!black] ( 2.0,  0.6) circle (1.8pt); 
\filldraw[purple!70!black] ( 2.0, -0.5) circle (1.8pt); 

\filldraw[gray!45] ( 4.5, -1.5) circle (1.8pt);
\filldraw[gray!45] (-3.5,  2.5) circle (1.8pt);
\filldraw[gray!45] ( 4.0,  2.0) circle (1.8pt);
\node[gray!50, font=\footnotesize] at (4.5, -1.4) {$v \notin I$};

\node at (0, -2.7)
    {$\displaystyle I \;=\; \bigcup_{i=1}^{n} s_i$};

\end{tikzpicture}

\caption{The relationship between the infinite universe of 
possible states $\mathcal{V}$, the finite target execution
 $I$ (the union of all samples), and the initial local 
 samples $s_i$ held by each process.}
\label{fig:universe-target}
\end{figure}

The processes satisfy the following input condition, 
which captures the redundancy of the system design.

\begin{definition}[$x$-overlap]
\label{def:overlap}
The input assignment $E$ satisfies the \textbf{$x$-overlap} property if 
every element $v \in I$ is witnessed by at least $x$ distinct processes. 
Formally, the witness count of an element $v$ in $E$ satisfies:
\[
    \forall\, v \in I :\quad
    \text{wit}_E(v) \;=\;
    \bigl|\{\, M_j \in \mathcal{M} \mid v \in s_j \,\}\bigr|
    \;\ge\; x
\]
\end{definition}

\begin{figure}[ht]
\centering
      
\begin{tikzpicture}[>=stealth, thick, scale=0.9]
    \node[font=\bfseries] at (0, 4.5) {Elements in $I$};
    \node[font=\bfseries] at (5, 4.5) {Processes $\mathcal{M}$};

    \foreach \i in {1,2,3} {
        \node[circle, fill=black, inner sep=1.5pt] (v\i) at (0, 4-\i*1.2) {};
        \node[anchor=east] at (-0.2, 4-\i*1.2) {$v_{\i}$};
    }

    \foreach \j in {1,2,3,4} {
        \node[draw, rectangle, rounded corners, fill=gray!10, minimum width=1.5cm] (M\j) at (5, 4.5-\j*1.1) {$M_{\j}$};
    }

    \draw[blue,dashed] (v1) -- (M1.west);
    \draw[blue,dashed] (v1) -- (M2.west);
    \draw[blue,dashed] (v1) -- (M4.west);
    
    \node[blue, font=\footnotesize, align=left] at (-3.5, 3.5) {$\forall\, v \in I :\quad wit_E(v) \ge 2$};
    \node[blue, font=\footnotesize, align=left] at (-3.5, 2.5) {$wit_E(v_1) = 3$};
    \node[blue, font=\footnotesize, align=left] at (-3.5, 1.5) {$wit_E(v_2) = 2$};
    \node[blue, font=\footnotesize, align=left] at (-3.5, 0.5) {$wit_E(v_3) = 2$};

    \draw[blue,dashed] (v2) -- (M2.west);
    \draw[blue,dashed] (v2) -- (M3.west);

    \draw[blue,dashed] (v3) -- (M3.west);
    \draw[blue,dashed] (v3) -- (M4.west);

\end{tikzpicture}

\caption{A bipartite graph representation of 
the $x$-overlap input condition. In this example, $x=2$ is satisfied by all 
 elements, while $v_1$ has a higher redundancy of $wit_E(v_1) = 3$.}
\label{fig:x-overlap}
\end{figure}

Intuitively, the core objective of Lattice Aggregation is to allow 
processes to safely recover a global state that is initially fragmented across the network. 
At the start of the execution, the complete target history $I$ is entirely distributed 
among the participants, where each process holds only an isolated, local puzzle piece ($s_i$). 
Through communication, processes pool these pieces together, causing their local knowledge 
sets to grow monotonically over time. 

\begin{definition}
\label{def:reconstruction}
A protocol solves the \textbf{c-Lattice Aggregation}
problem if, for any target execution $I$, it guarantees:

\begin{itemize}
    \item \textbf{Termination.} Every correct process
          eventually produces an output sample $\sigma_i$.

    \item \textbf{Comparability.} For any two correct
          processes $M_i, M_j$: $\sigma_i \subseteq \sigma_j$
          or $\sigma_j \subseteq \sigma_i$.

    \item \textbf{Downward-Validity.} For every correct process:
          $s_i \subseteq \sigma_i$. 

    \item \textbf{Upward-Validity.} $\sigma_i \subseteq I$.

    \item \textbf{c-Aggregation.} At least $c \geq 1$
      correct processes $M_k \in \mathcal{M} \setminus F$
      output $\sigma_k = I$.
\end{itemize}
\end{definition}

\emph{All-Total-Lattice Aggregation} refers to the case where all correct processes decide $I$.
Together, Downward-Validity and Upward-Validity constrain every
correct output to the chain
$s_i \subseteq \sigma_i \subseteq I$,
where both bounds are sets of elements $v \in \mathcal{V}$.
c-Aggregation requires that at least $c$ correct
processes reach the upper bound $I$; since $I$ is unknown
initially, this is only achievable through communciation, in the general case.
This problem is a variant of
\emph{Lattice Agreement}~\cite{attiya_atomic_1995}, the key difference being the strict requirement that at 
least $c$ correct processes must output the complete set $I$, rather than settling for a locally consistent subset.

\begin{figure}[ht]
\centering

\begin{tikzpicture}[>=stealth, scale=0.9]

\node[font=\bfseries, anchor=west] at (-2.5, 5.5) {Initial samples};

\node[draw, rounded corners, fill=blue!10,
      minimum width=1.8cm, minimum height=0.7cm]
    (M1) at (-1.5, 4.0) {$M_1 : \{v_1, v_2\}$};

\node[draw, rounded corners, fill=blue!10,
      minimum width=1.8cm, minimum height=0.7cm]
    (M2) at (-1.5, 2.8) {$M_2 : \{v_1, v_3\}$};

\node[draw, rounded corners, fill=blue!10,
      minimum width=1.8cm, minimum height=0.7cm]
    (M3) at (-1.5, 1.6) {$M_3 : \{v_2, v_3\}$};

\draw[->, thick, gray!60] (-0.4, 4.0) -- (2.5, 4.0)
    node[midway, above, font=\footnotesize, gray] {communication};
\draw[->, thick, gray!60] (-0.4, 2.8) -- (2.5, 2.8);
\draw[->, thick, gray!60] (-0.4, 1.6) -- (2.5, 1.6);

\draw[->, dashed, gray!50, thin] (-0.4, 3.9) -- (2.5, 3.0);
\draw[->, dashed, gray!50, thin] (-0.4, 2.7) -- (2.5, 1.7);
\draw[->, dashed, gray!50, thin] (-0.4, 1.7) -- (2.5, 2.9);

\node[font=\bfseries, anchor=west] at (2.5, 5.5) {After communication};

\node[draw, rounded corners, fill=blue!10,
      minimum width=2.0cm, minimum height=0.7cm]
    (M1f) at (4.0, 4.0) {$M_1 : \{v_1, v_2, v_3\}$};

\node[draw, rounded corners, fill=blue!10,
      minimum width=2.0cm, minimum height=0.7cm]
    (M2f) at (4.0, 2.8) {$M_2 : \{v_1, v_2, v_3\}$};

\node[draw, rounded corners, fill=gray!10,
      minimum width=2.0cm, minimum height=0.7cm]
    (M3f) at (4.0, 1.6) {$M_3 : \{v_2, v_3\}$};

\draw[teal!70, very thick, rounded corners]
    (2.6, 2.3) rectangle (5.6, 4.6);
\node[teal!70, font=\footnotesize\bfseries] at (3.85, 4.85)
    {$\sigma_k = I$ \small(2-Aggregation)};

\node[font=\small] at (-4.5, 2.8)
    {$I = \{v_1, v_2, v_3\} = \bigcup_i s_i$};
\node[font=\small] at (-4.5, 2.2)
    {$x\text{-overlap}: x = 2$};

\end{tikzpicture}

\caption{The Lattice Aggregation process. 
Initially, processes only possess partial samples of $I$. 
Through communication, their knowledge sets grow monotonically. 
The protocol succeeds if at least one correct process (e.g., $M_1$ or $M_2$) 
achieves \textbf{2-Aggregation} by learning the complete set $I$. 
The output of $M_3$ demonstrates that \textbf{Comparability} allows 
some correct processes to halt with a valid subset.}
\label{fig:reconstruction-phases}
\end{figure}

\section{Lattice Aggregation under Crash Failures}
\label{sec:crash}

Under crash failures the parameter $c$ plays no role: once
$x \ge t+1$, \emph{all} correct processes reconstruct $I$, so
$c$-Lattice Aggregation coincides with $(n{-}t)$-Lattice
Aggregation for every $1 \le c \le n-t$. This all-or-nothing
collapse does not hold in the Byzantine model, where the threshold
depends on $c$ (Section~\ref{sec:byzantine-unified}).

\begin{theorem}[Crash Threshold]
\label{thm:crash-unified}
Under crash failures with at most $t$ faults, $c$-Lattice Aggregation, 
for any value of $c$, is solvable if and only if $x \geq t+1$.
\end{theorem}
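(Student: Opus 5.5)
We prove the two directions separately. For sufficiency we give one protocol in which \emph{every} correct process outputs $I$; this solves $c$-Lattice Aggregation for every $1 \le c \le n-t$ at once. For necessity we show that if $x \le t$, no protocol can make even one correct process output $I$.

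\textbf{Sufficiency ($x \ge t+1$).} The protocol is a single round of SCD-broadcast. Each $M_i$ scd-broadcasts its sample $s_i$. It keeps a local view $\sigma_i$, initialised to $s_i$, and adds to it the contents of every delivered set. As soon as $M_i$ has scd-delivered messages from $n-t$ distinct processes, it outputs the current $\sigma_i$. We check the properties in turn.
\begin{itemize}
\item \emph{Termination.} By Termination of SCD-broadcast, every correct process eventually delivers the $n-t$ messages of the correct processes.
\item \emph{Validity.} Downward-Validity holds because $\sigma_i$ starts at $s_i$. Upward-Validity follows from Validity of SCD-broadcast: crashed processes only ever broadcast their true samples, so every delivered value lies in $I$.
\item \emph{Aggregation.} Fix $v \in I$. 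It has at least $t+1$ witnesses. Any $n-t$ senders omit at most $t$ processes, so they include at least one witness of $v$. Hence $v \in \sigma_i$, and therefore $\sigma_i = I$ for every correct process.
\item \emph{Comparability.} This is immediate, since all correct outputs equal $I$. Independently, the Order property already makes the delivered prefixes comparable, and so their unions are comparable.
\end{itemize}

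\textbf{Necessity ($x \le t$).} We use an indistinguishability argument; the main care is in building the two inputs. Suppose a protocol solves $c$-Lattice Aggregation with $c \ge 1$ while some admissible input has an element $v$ witnessed by exactly $x \le t$ processes. Let $W$ be this witness set, so $|W| = x \le t$. Take an assignment $E$ in which $v \in s_j$ exactly for $j \in W$, and arrange that every other element has at least $x$ witnesses outside $W$. For example, let all processes share a common element $u \neq v$; this needs $x \le n - t$, which holds because $t < n/2$. Let $E'$ be $E$ with $v$ removed from the samples of $W$. Then $E'$ still satisfies $x$-overlap, and its target is $I' = I \setminus \{v\}$. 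If samples must be non-empty, we keep $u$ in them.

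We compare two runs.
\begin{itemize}
\item \emph{Run $\alpha$ on $E$.} The processes in $W$ crash at the very start and send nothing, which is allowed since $|W| \le t$.
\item \emph{Run $\alpha'$ on $E'$.} The processes in $W$ also crash at the start. Every other process receives the same messages as in $\alpha$.
\end{itemize}
Each correct process outside $W$ cannot distinguish $\alpha$ from $\alpha'$, so it produces the same output in both runs. Because $v$ never appears outside $W$, the same element set is available in both runs, and in $\alpha'$ Upward-Validity forces every output to avoid $v$. Hence in $\alpha$ every correct process outputs a set missing $v$, and so none outputs $I$. This violates $c$-Aggregation for every $c \ge 1$. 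The requirement that correct processes outside $W$ exist holds because $n - t > 0$.

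The subtle point is to ensure that both $E$ and $E'$ satisfy the $x$-overlap hypothesis while $v$ sits exactly on $\le t$ processes; the common element $u$ and $t<n/2$ handle this.
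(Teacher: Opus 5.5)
Your proof is correct and follows the paper's argument essentially step for step. For sufficiency you use the same SCD-broadcast protocol that waits for $n-t$ distinct senders, and for necessity you use the same two-run indistinguishability: the $\le t$ witnesses of $v$ crash at the start, with a common padding element (your $u$, the paper's $z$). Your pigeonhole justification of aggregation is in fact a cleaner way to state the step the paper phrases as ``the correct holder $M_k$'s sample is always delivered'': any $n-t$ senders miss at most $t$ processes, so they include one of the $\ge t+1$ witnesses.
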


\begin{algorithm}[H]
\caption{Crash-resilient reconstruction for $M_j$}
\label{alg:crash}
\begin{algorithmic}[1]
\Statex \textbf{Input:} Initial sample $s_j \subseteq I$,
        failure bound $t$
\Statex \textbf{Output:} Output sample $\sigma_j$
\State $\mathcal{D}_j \leftarrow \emptyset$;\;
       \textsc{scd\_broadcast}($s_j$)
\While{$\bigl|\{\,k \mid \langle s_k,k\rangle \in
       \mathcal{D}_j\,\}\bigr| < n-t$}
    \State $\mathcal{D}_j \leftarrow \mathcal{D}_j \cup
           \textsc{scd\_deliver}()$
\EndWhile
\State \Return $\sigma_j \leftarrow s_j \cup
       \bigcup_{\langle s_k,k\rangle \in \mathcal{D}_j} s_k$
\end{algorithmic}
\end{algorithm}

\begin{proof}[Proof Sketch]
Full proofs are given in Appendix~\ref{app:crash}.

\emph{Sufficiency.} For any $v \in I$, the $x \ge t+1$ holders
include at least one correct process $M_k$ (at most $t$ crash).
$M_k$ broadcasts its sample containing $v$, and every $M_j$
waiting for $n-t$ messages is guaranteed to deliver $M_k$'s
sample. Hence $v \in \sigma_j$ for all correct $M_j$, yielding
$\sigma_j = I$.

\emph{Necessity.} Fix an element $v^*$ with $x \le t$ holders,
all in a set $A$ with $|A| = x$, and let $z \ne v^*$ be any
other element. Construct two executions: in $\mathcal{E}_1$ the
processes in $A$ hold $\{v^*, z\}$ and crash before sending;
in $\mathcal{E}_2$ they hold $\{z\}$ and crash before sending.
The surviving processes $R = \mathcal{M} \setminus A$ hold $\{z\}$
and receive identical message traces in both (nothing arrives
from $A$). Outputting $v^*$ violates Upward-Validity in
$\mathcal{E}_2$ (where $v^* \notin I_2 = \{z\}$); omitting it
violates $1$-Aggregation in $\mathcal{E}_1$.
\end{proof}

\section{$c$-Lattice Aggregation under Byzantine Failures}
\label{sec:byzantine-unified}

This section proves the \emph{sufficiency} direction of our tight
bound: when $x \ge 2t+c$, Algorithm~\ref{alg:byz} solves
$c$-Lattice Aggregation. The matching lower bound is established
in Theorem~\ref{thm:impossibility-unified}.

To withstand Byzantine behavior, a correct process cannot trust
an element it receives from a single sender, since that sender
may be faulty. We require each element to be backed by enough
distinct senders before it is included in the output.

\begin{definition}[Evidence Quorum]
\label{def:quorum}
Let $\mathcal{D}_j$ be the set of delivered pairs at process
$M_j$. An \emph{evidence quorum of size $L$} for an element $v$
is a set of distinct identifiers $Q \subseteq \mathcal{M}$,
$|Q| = L$, such that every $k \in Q$ has a delivered pair
$\langle s_k, k\rangle \in \mathcal{D}_j$ with $v \in s_k$.
The \emph{maximal evidence quorum} for $v$ at $M_j$ is
$Q_j(v) = \{\, k \mid \langle s_k, k\rangle \in \mathcal{D}_j
\;\land\; v \in s_k \,\}$.
We say $v$ is \textbf{certified} by $M_j$ if $|Q_j(v)| \ge t+1$.
Since at most $t$ processes are Byzantine, a certified element
has at least one correct witness.
\end{definition}

\begin{algorithm}[H]
\caption{Byzantine-resilient reconstruction for $M_j$}
\label{alg:byz}
\begin{algorithmic}[1]
\Statex \textbf{Input:} Initial sample $s_j \subseteq I$,
        failure bound $t$
\Statex \textbf{Output:} Output sample $\sigma_j$
\State $\mathcal{D}_j \leftarrow \emptyset$;\;
       \textsc{scd\_broadcast}($s_j$)
\While{$\bigl|\{\, k \mid \langle s_k, k\rangle \in
       \mathcal{D}_j \,\}\bigr| < n - t$}
       \State $\mathcal{D}_j \leftarrow \mathcal{D}_j \cup
              \textsc{scd\_deliver}()$
\EndWhile
\ForAll{$v \in \mathcal{V}$}
    \State $Q_j(v) \leftarrow \{\, k \mid
           \langle s_k, k\rangle \in \mathcal{D}_j \;\land\;
           v \in s_k \,\}$
\EndFor
\State \Return $\sigma_j \leftarrow \{\, v \in \mathcal{V}
       \mid v \in s_j \;\lor\; |Q_j(v)| \ge t+1 \,\}$
\end{algorithmic}
\end{algorithm}

\begin{lemma}[Structural properties]
\label{claim:properties-byz-sufficiency}
For every correct process, Algorithm~\ref{alg:byz} satisfies
Termination, Comparability, Downward-Validity, and
Upward-Validity, for any overlap $x$.
\end{lemma}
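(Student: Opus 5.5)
We treat the four properties separately. For each, we use only the SCD-broadcast guarantees of Definition~\ref{def:scd-broadcast}, the bound $|F|\le t$, and the authentication assumption; the overlap $x$ plays no role.

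\emph{Termination.} Each correct process scd-broadcasts its sample on the first line. There are at least $n-t$ correct processes, so by SCD Termination every correct $M_j$ eventually scd-delivers $\langle s_k,k\rangle$ for each correct $M_k$. Hence the guard of the while loop eventually fails. Computing $\sigma_j$ takes finite work even though the \textbf{for all} ranges over the infinite $\mathcal{V}$: only elements in $s_j$ or in some delivered sample can have $|Q_j(v)|\ge 1$, and these samples are finite. \emph{Downward-Validity} is immediate from the disjunct $v\in s_j$ in the return statement.

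\emph{Upward-Validity.} We use that samples are authenticated. A pair $\langle s_k,k\rangle$ delivered from a Byzantine $M_k$ still carries a genuine sample, so $s_k\subseteq I$; a pair from a correct $M_k$ is genuine by SCD Validity. Even without authentication, a certified element has $|Q_j(v)|\ge t+1$ and hence a correct witness, as noted after Definition~\ref{def:quorum}, so it lies in $I$. Together with $s_j\subseteq I$, this gives $\sigma_j\subseteq I$.

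\emph{Comparability.} This is the main step.
\begin{enumerate}
\item From the Order and Consistency properties, show that the delivered sets $\mathcal{D}_i,\mathcal{D}_j$ of any two correct processes at return time are comparable. Each process accumulates deliveries batch by batch in a common order, so the delivered sets are prefixes of one order. This is the snapshot-like behaviour mentioned in Section~\ref{sec:model}.
\item Say $\mathcal{D}_j\subseteq\mathcal{D}_i$. Then $Q_j(v)\subseteq Q_i(v)$ for every $v$, so the certified part $C_j=\{v: |Q_j(v)|\ge t+1\}$ satisfies $C_j\subseteq C_i$.
\item It remains to handle the uncertified own-sample term $s_j\setminus C_j$. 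This is the obstacle I expect. An element of $s_j$ that $M_i$ does not certify breaks $\sigma_j\subseteq\sigma_i$, and symmetrically $s_i\setminus C_j$ threatens the other inclusion. My first attempt would be to show that every element of a correct process's sample is certified by every correct process with the larger delivered set, so that $s_j\subseteq C_i$. I would look for this in the pigeonhole interplay between the $n-t$ delivered pairs and the witnesses of $v$.
\end{enumerate}
Step 3 does not obviously hold for arbitrary $x$: with a single witness in $\mathcal{D}_i$, $v$ is not certified. So if the pigeonhole argument fails, I would check whether the lemma needs the own-sample disjunct removed, or Downward-Validity restricted to certified elements, in order to hold for every overlap $x$.
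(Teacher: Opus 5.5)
Your arguments for Termination, Downward-Validity and Upward-Validity match the paper's, and the remark that only finitely many $v$ need checking is a welcome addition. Your Step~1 is more careful than the paper, which cites only Order. Consistency is what guarantees that a pair in $\mathcal{D}_i\setminus\mathcal{D}_j$ is eventually delivered by $M_j$ after everything in $\mathcal{D}_j$, so that Order applies. The real issue is Step~3, and you are not missing an idea there. It cannot be closed, because Comparability is false for Algorithm~\ref{alg:byz} at small $x$.

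Take $n=4$, $t=1$, all four processes correct, and $s_k=\{a_k\}$ with the $a_k$ distinct, so $x=1$. In any delivered set each element has at most one witness, so $|Q_j(v)|\le 1<t+1$ and nothing is certified. Every process therefore returns $\sigma_k=s_k$, and $\sigma_1=\{a_1\}$, $\sigma_2=\{a_2\}$ are incomparable under every schedule. The paper's proof settles Comparability by asserting that $\sigma_j$ is monotone in the delivered set. That is exactly the step you declined to take: $\sigma_j$ depends on $s_j$ as well as on $\mathcal{D}_j$, and the uncertified part $s_j\setminus C_j$ is invisible to a process with a larger delivered set. So your suspicion is right, the lemma's ``for any overlap $x$'' cannot stand, and your ``first attempt'' $s_j\subseteq C_i$ must fail in general.

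On your proposed repairs: dropping the own-sample disjunct kills Downward-Validity (in the example every process would output $\emptyset$), and weakening Downward-Validity changes the problem. Two sound fixes exist.

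The first is to restrict the overlap to $x\ge 3t+1$, where your pigeonhole idea does work. A correct $M_i$ has at most $t$ undelivered senders and at most $t$ Byzantine delivered ones, so $|Q_i(v)|\ge x-2t\ge t+1$ for every $v\in I$. Then every correct process outputs $I$ and Comparability is trivial. Below this threshold, incomparability can appear even at $x=2t+1$ if Byzantine processes may omit elements, as they do in the paper's lower-bound executions. For example, take $n=4$, $t=1$, $s_1=\{v\}$, $s_2=s_4=\{u,v\}$, $s_3=\{u\}$, with Byzantine $M_4$ reporting neither $u$ nor $v$ and a first SCD batch from $\{M_1,M_3,M_4\}$; this gives $\sigma_1=\{v\}$ and $\sigma_3=\{u\}$.

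The second fix applies if you take authentication literally, as you do for Upward-Validity: make the output a function of $\mathcal{D}_j$ alone. Additionally wait until $\langle s_j,j\rangle\in\mathcal{D}_j$, and return $\bigcup_{\langle s_k,k\rangle\in\mathcal{D}_j}s_k$. Your Steps~1--2 then give Comparability for every $x$, delivery of the process's own pair gives Downward-Validity, and authentication gives Upward-Validity.
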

\begin{proof}
\emph{Termination.} By the Termination and Consistency of
SCD-broadcast, every correct process eventually delivers the
message of every correct process. Since at least $n-t$
processes are correct, the loop condition is met in finite
time and $M_j$ terminates.

\emph{Comparability.} By the Order property of SCD-broadcast,
the delivered sets of any two correct processes are related by
inclusion at termination: $\mathcal{D}_j \subseteq \mathcal{D}_k$
or $\mathcal{D}_k \subseteq \mathcal{D}_j$. As $\sigma_j$ and
$\sigma_k$ are monotone in the delivered set, the outputs are
likewise comparable.

\emph{Downward-Validity.} Line~6 includes every $v \in s_j$ in
$\sigma_j$ directly, so $s_j \subseteq \sigma_j$.

\emph{Upward-Validity.} We show $\sigma_j \subseteq I$. Take any
$v \in \sigma_j$. If $v \in s_j$, then $v \in I$ since
$s_j \subseteq I$. Otherwise $v$ is certified: $|Q_j(v)| \ge t+1$.
At most $t$ members of $Q_j(v)$ are Byzantine, so at least one
member is a correct process $M_k$ with $v \in s_k$; by
SCD-broadcast Validity this pair was genuinely sent by $M_k$,
and $s_k \subseteq I$ gives $v \in I$. A fabricated element
$w \notin I$ is held by no correct process, so its quorum
contains only Byzantine members and never reaches $t+1$; it is
rejected. Hence $\sigma_j \subseteq I$.
\end{proof}

We now prove $c$-Aggregation. The strategy is to order the
correct processes by the size of their delivered set and show
that the $c$ processes with the largest sets each accumulate a
certifying quorum for \emph{every} element of $I$.

By the Order property, the delivered sets of correct processes
form a chain at termination:
$\mathcal{D}_0 \subsetneq \mathcal{D}_1 \subsetneq \cdots
\subsetneq \mathcal{D}_t$.
Let $R_i$ be any correct process whose delivered set is
$\mathcal{D}_i$; the index $i$ tracks the \emph{position} of a
snapshot in the chain, not a static identifier. Each $R_i$
waits for $n-t$ distinct senders. In the worst case the
adversary spends its full budget: $t$ Byzantine processes do not send any messages, 
and the scheduler releases the $t$ delayed correct
processes one at a time across successive snapshots, so
$|\mathcal{D}_i| = n-t+i$, of which $n-2t+i$ are correct.

Fix $v \in I$. By the $x$-overlap condition at least
$x \ge 2t+c$ processes hold $v$. In the worst case $t$ of them
are Byzantine and suppress $v$, leaving $\ge t+c$ correct
holders. The scheduler delays $t$ of these, one per snapshot,
so the base snapshot $\mathcal{D}_0$ already contains $\ge c$
correct holders of $v$, and each later snapshot adds one more:
$\mathrm{wit}_{\mathcal{D}_i}(v) \;\ge\; c + i$.

\begin{lemma}[Parameterized Quorum Support]
\label{lem:support-c}
Let $x \ge 2t+c$. For every $v \in I$ and every index
$i \ge t+1-c$, the process $R_i$ satisfies $|Q_i(v)| \ge t+1$.
\end{lemma}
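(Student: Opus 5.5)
The argument is a direct count on the snapshot chain set up just before the lemma, using the witness bound $\mathrm{wit}_{\mathcal{D}_i}(v) \ge c+i$. The core observation is that a correct holder of $v$ whose pair lies in $\mathcal{D}_i$ contributes its identifier to $Q_i(v)$. Indeed, by SCD-broadcast Validity a delivered pair $\langle s_k,k\rangle$ from a correct $M_k$ is exactly the sample $M_k$ broadcast, so $v\in s_k$ puts $k\in Q_i(v)$. Hence $|Q_i(v)| \ge \mathrm{wit}_{\mathcal{D}_i}(v)$, where the witness count ranges only over correct senders in $\mathcal{D}_i$. Byzantine senders can only add to $Q_i(v)$, never remove correct members, so ignoring them gives a valid lower bound.

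First, I would justify the bound $\mathrm{wit}_{\mathcal{D}_i}(v)\ge c+i$ more carefully than the informal worst-case narrative does. By $x$-overlap, at least $x-t\ge t+c$ correct processes hold $v$. Each $R_i$ has delivered pairs from $|\mathcal{D}_i|\ge n-t+i$ distinct senders. At most $t$ of these senders are Byzantine, so at least $n-2t+i$ are correct. There are $n-|F|\le n$ correct processes in total, so the number of correct senders missing from $\mathcal{D}_i$ is at most $(n-|F|)-(n-2t+i)\le t-i$ when $|F|=t$.

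The general case $|F|<t$ needs a short check: with fewer faults, more correct processes exist, but $\mathcal{D}_0$ also contains more correct senders. The cleanest bound is to say that at most $t-i$ correct processes are absent from $\mathcal{D}_i$, using $|\mathcal{D}_i|\ge n-t+i$ and that all absent senders are distinct. Even in the worst case all absent correct processes are holders of $v$, so $\mathcal{D}_i$ retains at least $(t+c)-(t-i)=c+i$ correct holders.

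With that bound, the lemma is immediate. For $i\ge t+1-c$ we get $|Q_i(v)|\ge c+i\ge t+1$, so $v$ is certified at $R_i$.

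The main obstacle is the counting step relating the chain index $i$ to $|\mathcal{D}_i|$. The paper's chain $\mathcal{D}_0\subsetneq\cdots$ grows by at least one sender per strict inclusion, so $|\mathcal{D}_i|\ge n-t+i$ needs each strict step to add a new sender identity, not merely a duplicate message. I would argue this from Integrity (each pair is delivered once, and each process broadcasts a single sample). I would also note that the number of missing correct processes is bounded by $t-i$ uniformly in $|F|$, which handles executions where the adversary does not use its full budget.
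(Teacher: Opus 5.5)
Your counting is the paper's argument, done more carefully. The paper writes $|Q_i(v)| = \mathrm{wit}_{\mathcal{D}_i}(v)$ on the grounds that Byzantine processes ``do not send any messages''. That clashes with its own setup, where $\mathcal{D}_i$ has $n-t+i$ senders but only $n-2t+i$ correct ones. It then gets $c+i$ from a one-per-snapshot scheduling story. You instead bound $|Q_i(v)|$ from below by the number of correct holders of $v$ that have a pair in $\mathcal{D}_i$, which Validity justifies. You add that at most $n-(n-t+i) = t-i$ identifiers are absent from $\mathcal{D}_i$. With the $x-t \ge t+c$ correct holders, this gives $c+i$ for arbitrary Byzantine behaviour and any $|F| \le t$, which is strictly more robust than the paper's step.

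The step you flag as the main obstacle, however, cannot be closed via Integrity. Integrity only forbids delivering the same \emph{pair} twice. A Byzantine $M_k$ may scd-broadcast several distinct samples, and Algorithm~\ref{alg:byz} keeps all of them in $\mathcal{D}_j$. So a strict inclusion $\mathcal{D}_i \subsetneq \mathcal{D}_{i+1}$ may add only a second pair from a sender already present, contributing no new identifier and no new witness.

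In fact, if $i$ is read literally as a position in the chain of distinct delivered sets, the lemma is false. Let all correct processes stop with the same $n-t$ senders: the $t$ Byzantine ones, whose reported samples omit $v$, and $n-2t$ correct ones, of which only $c$ hold $v$. The remaining $t$ correct processes all hold $v$ and are delayed. The correct processes differ only in how many extra Byzantine pairs arrived in their final scd-delivery. This gives a chain $\mathcal{D}_0 \subsetneq \cdots \subsetneq \mathcal{D}_t$, compatible with every SCD property, in which $|Q_i(v)| = c$ at every level. So for $c \le t$ the claim fails at $i = t$.

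Hence ``$\mathcal{D}_i$ has at least $n-t+i$ distinct senders'' must be taken as the meaning of the index, as the paper's setup stipulates ($|\mathcal{D}_i| = n-t+i$), not derived from SCD-broadcast. Under that reading your proof is complete. Whether $c$ correct processes actually reach such indices is the burden of Theorem~\ref{thm:byzantine-unified}, not of this lemma.
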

\begin{proof}
Byzantine processes do not send any messages and contribute no witnesses,
so $|Q_i(v)| = \mathrm{wit}_{\mathcal{D}_i}(v) \ge c+i$. For
$i \ge t+1-c$ this gives $|Q_i(v)| \ge c+(t+1-c) = t+1$.
\end{proof}

\begin{theorem}[$c$-Lattice Aggregation under Byzantine Failures]
\label{thm:byzantine-unified}
Under Byzantine failures with $t < n/3$, for every
$1 \le c \le t+1$:
\[
    x \ge 2t+c \;\Longrightarrow\;
    c\text{-Lattice Aggregation is solvable.}
\]
When $c = t+1$ (i.e.\ $x \ge 3t+1$), all $n-t$ correct
processes output $I$, achieving All-Total Aggregation.
\end{theorem}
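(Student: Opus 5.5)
The plan is to obtain the theorem by running Algorithm~\ref{alg:byz} and combining the two lemmas already established. Lemma~\ref{claim:properties-byz-sufficiency} gives Termination, Comparability, Downward-Validity and Upward-Validity for every correct process, for any $x$. So only $c$-Aggregation remains: at least $c$ correct processes must output exactly $I$. By Upward-Validity, $\sigma_k \subseteq I$ always holds, so it suffices to exhibit $c$ correct processes $M_k$ with $I \subseteq \sigma_k$. That is, for each such process, every $v \in I$ must either lie in $s_k$ or be certified, with $|Q_k(v)| \ge t+1$.

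The key steps, in order, are as follows.
\begin{enumerate}
\item Use the Order property of SCD-broadcast to arrange the terminal delivered sets of correct processes into the chain $\mathcal{D}_0 \subsetneq \cdots \subsetneq \mathcal{D}_t$, with representatives $R_i$ as in the preceding discussion. Each $\mathcal{D}_i$ has at least $n-t$ senders.
\item Fix $v \in I$ and apply Lemma~\ref{lem:support-c}. For every index $i \ge t+1-c$ this gives $|Q_i(v)| \ge t+1$, so $v$ is certified at $R_i$ and hence $v \in \sigma_{R_i}$. Since $v$ was arbitrary, $\sigma_{R_i} = I$ for all such $i$.
\item Count the indices $i \in \{t+1-c, \ldots, t\}$. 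There are exactly $c$ of them, and they are meaningful because $1 \le c \le t+1$ forces $t+1-c \ge 0$.
\item Argue that these $c$ chain positions are held by $c$ distinct correct processes. Comparability then holds automatically, since all of them output the same set $I$.
\item For the special case $c = t+1$, observe that the condition becomes $i \ge 0$. Every correct process then sits at some chain position with $i \ge 0$, so all $n-t$ correct processes output $I$. This is All-Total Aggregation, and $x \ge 2t+(t+1) = 3t+1$.
\end{enumerate}

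I expect step 4, together with the justification that the worst-case accounting behind Lemma~\ref{lem:support-c} really is worst case, to be the main obstacle. The bound $\mathrm{wit}_{\mathcal{D}_i}(v) \ge c+i$ relies on two facts. First, any $n-t$ senders omit at most $t$ processes, so they include at least $x-2t \ge c$ correct holders of $v$. Second, each strict growth of the chain beyond the base adds a sender, and in the adversarial schedule this added sender is a correct holder of $v$. I would make this robust as follows.

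First, the base bound needs no scheduling assumption: any $\mathcal{D}_0$ with $n-t$ senders misses at most $t$ of the at least $x-t$ correct holders, so it contains at least $x-2t \ge c$ of them. Byzantine senders in $\mathcal{D}_i$ can only enlarge $Q_i(v)$, or occupy slots that would otherwise be filled. In both cases the count of correct holders of $v$ beyond the base is monotone in the position along the chain.

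Second, if fewer than $c$ distinct correct processes occupy the top positions, I would use the Consistency and Termination properties. They show that the delivered sets of the top correct processes cannot stop short of the missing correct holders without violating the $n-t$ waiting rule. Failing a clean argument, I would fall back to the simpler case split: either the maximal correct snapshot already contains at least $t+1$ correct holders of every $v$, or the chain has enough strict steps above the base for $c$ processes to reach the threshold.
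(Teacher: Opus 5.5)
Your steps 1--3 and 5 are exactly the paper's proof. The paper invokes Lemma~\ref{lem:support-c}, counts the $c$ indices $t+1-c,\dots,t$, and notes that $i\ge 0$ when $c=t+1$. It never proves your step~4: the $t+1$ strictly increasing terminal snapshots and the increment $\mathrm{wit}_{\mathcal{D}_i}(v)\ge c+i$ both come from an assumed adversary strategy (silent Byzantine processes, delayed correct holders released one per snapshot). You are right to distrust this, but your repairs do not close the gap, and for Algorithm~\ref{alg:byz} as written it cannot be closed. Here is a counterexample.
\begin{itemize}
\item Take $t=2$, $n=7$, $c=1$, $x=5$. The Byzantine $b_1,b_2$ hold $e_A,e_B,e_C$.
\item The correct holders are $\{p_3,p_4,p_5\}$ for $e_A$, $\{p_1,p_2,p_5\}$ for $e_B$, and $\{p_1,p_2,p_4\}$ for $e_C$. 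Every element has exactly $5$ witnesses, and no correct process holds all three.
\item $b_1,b_2$ scd-broadcast messages containing none of the three elements, and the messages of $p_4,p_5$ are delayed.
\item Every correct process first delivers the common set of the messages of $p_1,p_2,p_3,b_1,b_2$. The messages of $p_4,p_5$ are delivered later, in the same order everywhere.
\end{itemize}
This run respects Definition~\ref{def:scd-broadcast}, and every correct process leaves the loop with $n-t$ senders. Then $|Q(e_A)|=1$ and $|Q(e_B)|=|Q(e_C)|=2<t+1$, so $\sigma_j=s_j\neq I$ for every correct $p_j$, and $1$-Aggregation fails. In this run the chain has a single level. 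The $n-t$ rule is met by the Byzantine senders, which fill the quorum slots. Termination and Consistency only constrain deliveries made after the loop has exited. Neither branch of your fallback case split holds either.

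What is missing is a mechanism that actually produces the $+1$ per chain position. The simplest fix is to make each correct process also wait until its own pair $\langle s_j,j\rangle$ lies in $\mathcal{D}_j$; this terminates by SCD Termination. Order the correct processes $q_1,q_2,\dots$ by their comparable terminal sets, breaking ties arbitrarily. Then $\mathcal{D}_{q_k}\supseteq\mathcal{D}_{q_j}$ for $j\le k$, so $\mathcal{D}_{q_k}$ contains the genuine messages of $q_1,\dots,q_k$. Each of the top $c$ processes therefore misses at most $c-1$ correct messages. It sees at least $(x-t)-(c-1)\ge t+1$ correct holders of every $v\in I$, so it outputs $I$. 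This gives $c$ distinct processes with no scheduling assumption. Your step~5 is sound as it stands: for $c=t+1$, your own schedule-free base bound gives at least $x-2t\ge t+1$ correct holders in every terminal set with $n-t$ senders.
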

\begin{proof}
By Lemma~\ref{lem:support-c}, every process $R_i$ with
$i \ge t+1-c$ certifies every $v \in I$, so $I \subseteq
\sigma_i$; with Upward-Validity ($\sigma_i \subseteq I$) this
yields $\sigma_i = I$. The number of such indices is
$|\{i : t+1-c \le i \le t\}| = c$, so at least $c$ correct
processes output $I$, establishing $c$-Aggregation.
For $c = t+1$ the condition $i \ge 0$ holds for all
$i \in \{0,\ldots,t\}$, so every correct process outputs $I$.
\end{proof}

To show the threshold $x \ge 2t+c$ is optimal, we next prove a
matching lower bound: below it, an adversary can force
conflicting states and $c$-Aggregation becomes impossible.

\begin{theorem}[Impossibility: Tight Lower Bound]
\label{thm:impossibility-unified}
For all $1 \le c \le t+1$, if $x \le 2t+c-1$, then $c$-Lattice Aggregation cannot be solved by any deterministic algorithm with $t$ Byzantine failures.
\end{theorem}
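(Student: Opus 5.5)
The plan is an indistinguishability argument in the style of the crash lower bound in Theorem~\ref{thm:crash-unified}. I would combine it with the certification intuition behind Definition~\ref{def:quorum}: a correct process that does not itself witness an element $v$ can never safely output $v$ unless it has evidence from at least $t+1$ distinct senders claiming $v$. Assume for contradiction that some deterministic protocol $P$ solves $c$-Lattice Aggregation with $x \le 2t+c-1$. The proof has three steps.

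\emph{Step 1 (a certification lemma).} Let $M_j$ be a correct process with $v \notin s_j$. Suppose that at the moment $M_j$ decides, at most $t$ distinct processes have ever communicated anything consistent with holding $v$. Then $v \notin \sigma_j$. To prove this, build a twin execution in which exactly those $\le t$ processes are Byzantine and fabricate $v$, so $v \notin I$. In the original execution they are correct holders, or Byzantine processes that behave identically. $M_j$ receives the same messages in both executions and therefore decides the same output. Upward-Validity in the twin execution then forbids $v \in \sigma_j$. A subtlety must be handled explicitly: the paper assumes samples are authenticated. The twin therefore has to be constructed so that a fabricated claim is legitimate, for example by making the fabricators' samples genuinely contain $v$ as part of an alternative input assignment that satisfies $x$-overlap vacuously for $v$. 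Alternatively, I would follow Lemma~\ref{claim:properties-byz-sufficiency}, which already treats fabricated elements as possible.

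\emph{Step 2 (the bad execution).} Take one element $v$ with exactly $x \le 2t+c-1$ holders. Make $t$ of these holders Byzantine; they behave correctly except that they act as if $v \notin s_k$. This leaves $h = x-t \le t+c-1$ correct holders, which I split into a set $A$ of $t$ processes and a set $B$ of $h-t \le c-1$ processes (pad if $h<t$). The scheduler delays all messages from $B$ until every correct process has decided. This is allowed by asynchrony: $|B| \le c-1 \le t$, so each process can still hear from $n-t$ senders, since the Byzantine processes do speak. Every correct process outside $A\cup B$ then sees at most $|A| = t$ claimants of $v$, so by Step 1 it omits $v$ and cannot output $I$. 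Only processes in $A \cup B$ can output $I$, and that set has at most $t+c-1$ members.

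\emph{Step 3 (main obstacle: cutting $A\cup B$ down to $c-1$).} I must also prevent the $t$ processes in $A$ from reaching $I$. I would add a second element $w$ whose holder set is arranged like that of $v$, with the roles rotated: the processes of $A$ are non-holders of $w$, they see at most $t$ claimants of $w$, and by Step 1 they omit $w$. This leaves only $B$, of size at most $c-1$, able to output $I = \{v,w,\dots\}$, which contradicts $c$-Aggregation. The delicate part is doing both rotations simultaneously. A single schedule and a single Byzantine set of size $t$ must suppress both $v$ and $w$. The delayed sets for $v$ and $w$ must together fit within the $t$ silent slots. Each holder set must still have size $x$ so that $x$-overlap holds. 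Comparability must not already be violated in a way that contradicts the assumption that $P$ is correct rather than yielding the desired contradiction. If the counting does not close with two elements, I would first use more elements with cyclically shifted holder sets. Failing that, I would relax the setup to use distinct Byzantine sets in the twin executions for different elements, since Step 1 needs only one twin per (process, element) pair.
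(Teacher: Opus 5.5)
\paragraph{Where the argument breaks.} The argument fails in Step~2, where you let the scheduler hold back $B$'s messages ``until every correct process has decided'' because each process ``can still hear from $n-t$ senders.'' Hearing from $n-t$ senders forces a decision only in protocols written that way, such as Algorithm~\ref{alg:byz}. An arbitrary protocol is forced to decide before anything from $B$ arrives only if some \emph{admissible} execution (at most $t$ faulty processes, $x$-overlap holding) looks the same to the deciding process and never delivers $B$'s messages, i.e.\ has $B\subseteq F$.

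Take a process outside $A\cup B$ that has heard from everyone outside $B$, which your schedule permits. For $x=2t+c-1$ with $c\ge 2$ (so $|B|=c-1\ge 1$), no such execution exists for it. If $v\notin I$ there, the $t$ claimants in $A$ are fabricators, giving $t+|B|>t$ faults. If $v\in I$, all its correct holders have been heard and so lie among the $t$ claimants, giving at most $2t<x$ holders. So a correct protocol may simply keep waiting. $B$'s samples then arrive, and $v$ has $t+|B|\ge t+1$ claimants. Step~3 rests on the same delay and fails with it.

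\paragraph{The statement is false for $c\ge 2$.} No choice of configuration repairs this, because for $c\ge 2$ All-Total aggregation is already solvable once $x\ge 2t+1$. Let every process send its sample to all. Let $M_j$ record the first sample from each sender. Once it has samples from a set $H\ni M_j$ with $|H|\ge n-t$, let $U=\mathcal{M}\setminus H$. $M_j$ waits until no element is claimed by $k$ members of $H$ with $x-t-|U|\le k\le t$, and then outputs the elements claimed at least $t+1$ times.
\begin{itemize}
\item \emph{Correct output.} At that moment each $v\in I$ has at least $x-t-|U|\ge 1$ correct claimants, hence at least $t+1$ claimants. Each output element has a correct claimant. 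So $\sigma_j=I$.
\item \emph{The wait ends.} Once all correct samples have arrived, $U\subseteq F$. Then each $v\in I$ has at least $x-t\ge t+1$ claimants, and a ghost has at most $t-|U|<x-t-|U|$.
\end{itemize}
With $t=1$, $n=4$, $x=3$, $c=2$ this refutes the theorem.

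\paragraph{Relation to the paper's proof.} The paper's proof of the case $c\ge 2$ follows your plan. Its construction is exactly the two-element rotation you were missing in Step~3: one Byzantine set holding and suppressing both contested elements, one delayed set $D$ of size $c-1$ holding both, and $n=3t+c-1$. But it fails at the same spot: in $\mathcal{E}_1$ a process of $A$ or $C$ can infer that $D$ contains a correct process and wait for it. Only the case $c=1$ ($x\le 2t$) goes through, because there no delay is needed.

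\paragraph{Step~1.} Your first repair of Step~1 is not available. Since $I=\bigcup_i s_i$ includes faulty samples, giving the fabricators genuine copies of $v$ puts $v$ into $I$. Step~1 works only with the unauthenticated ghosts that the paper's lower bound tacitly allows.
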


\begin{proof}
Suppose, for contradiction, that there exists a $t$-resilient
deterministic algorithm $\mathcal{A}$ that solves $c$-Lattice
Aggregation when $x \le 2t+c-1$. 
We distinguish two cases according to whether the standard
partition is admissible under the Byzantine bound $t < n/3$.

\medskip
\noindent\textbf{Case $c \ge 2$.}
Set $n = 3t+c-1$; since $c \ge 2$, we have $n \ge 3t+1 > 3t$,
so the bound is satisfied. Partition $\mathcal{M}$ into four
disjoint subsets $A$, $B$, $C$, $D$ with $|A| = |B| = |C| = t$
and $|D| = c-1$.
 Since $n = 3t+c-1$, this partition is exact.
 We construct three indistinguishable executions over two elements $u, v$ and a ghost element $w \notin I$. 

Due to the asynchronous nature of the system, messages can experience arbitrary network delays. 
In all three executions, the messages from the processes in $D$ are delayed. Because the system is asynchronous, processes 
cannot safely wait indefinitely without violating termination; they must compute their output. 
Since $A \cup B \cup C$ provides $3t$ senders and the termination threshold is $n-t = 2t+c-1 \le 3t$, the correct processes in $A \cup B \cup C$ will terminate and compute their final state without receiving any data from $D$.

\medskip
\noindent$\diamond$ \textbf{Execution $\mathcal{E}_1$} ($I_1 = \{u, v\}$, Byzantine set $B$).
The initial assignment is: $A$ holds $\{v\}$, $B$ holds $\{u,v\}$, $C$ holds $\{u\}$, $D$ holds $\{u,v\}$. The overlap condition 
is satisfied since $\mathrm{wit}_{E_1}(u) = |B|+|C|+|D| = 2t+c-1$. Processes in $B$ are Byzantine; they suppress $u$ and $v$ and inject $w$ instead. 
The messages from processes in $D$ are delayed due to asynchrony. The correct processes in $A \cup C$ collect their messages strictly from $A \cup B \cup C$.

\noindent$\diamond$ \textbf{Execution $\mathcal{E}_2$} ($I_2 = \{u, w\}$, Byzantine set $A$).
The initial assignment is: $A$ holds $\{u,w\}$, $B$ holds $\{w\}$, $C$ holds $\{u\}$, $D$ holds $\{u,w\}$. Processes in $A$ are Byzantine; they suppress $u$ and $w$ and inject $v$ instead. The messages from processes in $D$ are delayed due to asynchrony.

\noindent$\diamond$ \textbf{Execution $\mathcal{E}_3$} ($I_3 = \{v, w\}$, Byzantine set $C$).
The initial assignment is: $A$ holds $\{v\}$, $B$ holds $\{w\}$, $C$ holds $\{v,w\}$, $D$ holds $\{v,w\}$. Processes in $C$ are Byzantine; they suppress $v$ and $w$ and inject $u$ instead. The messages from processes in $D$ are delayed due to asynchrony.

\begin{figure}[h]
        \centering
        \begin{tikzpicture}[
            node distance=0.4cm and 0.5cm,
            proc/.style={rectangle, draw, rounded corners=3pt, minimum width=1.7cm, minimum height=1.4cm, align=center, font=\small},
            correct/.style={proc, fill=green!8, draw=green!60!black},
            byz/.style={proc, fill=red!8, draw=red!60!black, thick},
            delayed/.style={proc, fill=gray!10, draw=gray!40, dashed, text=gray!70},
            exec_label/.style={font=\bfseries\small, align=left, minimum width=1cm},
            indist/.style={draw=blue!70, dashed, ultra thick, <->, >=Stealth}
        ]

        \node[exec_label] (E1_lbl) {\textbf{$\mathcal{E}_1$} \\ $I_1 = \{u,v\}$};
        \node[correct, right=of E1_lbl] (E1_A) {$A$ \\ $s_A=\{v\}$ \\ \footnotesize Correct};
        \node[byz, right=of E1_A] (E1_B) {\footnotesize \color{red!80!black}Byzantine $B$ \\ $s_B=\{u,v\}$ \\ \footnotesize \color{red!80!black}Omits $u,v$,\\ \footnotesize \color{red!80!black} sends $w$};
        \node[correct, right=of E1_B] (E1_C) {$C$ \\ $s_C=\{u\}$ \\ \footnotesize Correct};
        \node[delayed, right=of E1_C] (E1_D) {$D$ \\ $s_D=\{u,v\}$ \\ \footnotesize Delayed};

        \begin{scope}[on background layer]
            \node[fill=gray!3, draw=gray!20, rounded corners=5pt, fit=(E1_lbl) (E1_D), inner sep=5pt] (box_E1) {};
        \end{scope}

        \node[exec_label, below=1.2cm of E1_lbl] (E2_lbl) {\textbf{$\mathcal{E}_2$} \\ $I_2 = \{u,w\}$};
        \node[byz, right=of E2_lbl] (E2_A) {\footnotesize \color{red!80!black}Byzantine $A$ \\ $s_A=\{u,w\}$ \\ \footnotesize \color{red!80!black}Omits $u,w$,\\ \footnotesize \color{red!80!black} sends $v$};
        \node[correct, right=of E2_A] (E2_B) {$B$ \\ $s_B=\{w\}$ \\ \footnotesize Correct};
        \node[correct, right=of E2_B] (E2_C) {$C$ \\ $s_C=\{u\}$ \\ \footnotesize Correct};
        \node[delayed, right=of E2_C] (E2_D) {$D$ \\ $s_D=\{u,w\}$ \\ \footnotesize Delayed};

        \begin{scope}[on background layer]
            \node[fill=gray!3, draw=gray!20, rounded corners=5pt, fit=(E2_lbl) (E2_D), inner sep=5pt] (box_E2) {};
        \end{scope}

        \node[exec_label, below=1.2cm of E2_lbl] (E3_lbl) {\textbf{$\mathcal{E}_3$} \\ $I_3 = \{v,w\}$};
        \node[correct, right=of E3_lbl] (E3_A) {$A$ \\ $s_A=\{v\}$ \\ \footnotesize Correct};
        \node[correct, right=of E3_A] (E3_B) {$B$ \\ $s_B=\{w\}$ \\ \footnotesize Correct};
        \node[byz, right=of E3_B] (E3_C) {\footnotesize \color{red!80!black}Byzantine $C$ \\ $s_C=\{v,w\}$ \\ \footnotesize \color{red!80!black}Omits $v,w$,\\ \footnotesize \color{red!80!black} sends $u$};
        \node[delayed, right=of E3_C] (E3_D) {$D$ \\ $s_D=\{v,w\}$ \\ \footnotesize Delayed};

        \begin{scope}[on background layer]
            \node[fill=gray!3, draw=gray!20, rounded corners=5pt, fit=(E3_lbl) (E3_D), inner sep=5pt] (box_E3) {};
        \end{scope}

        \end{tikzpicture}
        \caption{Indistinguishability layout for unified Byzantine Lattice Aggregation.}
        \label{fig:byz_indistinguishability_unified}
\end{figure}

Because the messages received by correct subsets are identical across parallel executions, we establish:
\[
    \mathcal{E}_1 \sim_C \mathcal{E}_2, \qquad \mathcal{E}_1 \sim_A \mathcal{E}_3, \qquad \mathcal{E}_2 \sim_B \mathcal{E}_3
\]
where $\mathcal{E} \sim_X \mathcal{E}'$ denotes that the executions $\mathcal{E}$ and $\mathcal{E}'$ are \textbf{indistinguishable} to the processes in set $X$. Specifically, every process in $X$ receives the exact same sequence of messages and observes the identical local state in both executions, making it impossible for them to determine which execution is actually occurring.

We show that at most $c-1$ correct processes output $I_1$ in $\mathcal{E}_1$, strictly contradicting $c$-Aggregation. The correct processes in $\mathcal{E}_1$ are $A \cup C \cup D$.
\begin{itemize}
    \item \textbf{No process in $A$ outputs $I_1$:} Since $\mathcal{E}_1 \sim_A \mathcal{E}_3$, any process in $A$ that outputs $I_1 = \{u,v\}$ in $\mathcal{E}_1$ must also output $\{u,v\}$ in $\mathcal{E}_3$. But $u \notin I_3$, violating Upward-Validity in $\mathcal{E}_3$.
    \item \textbf{No process in $C$ outputs $I_1$:} Since $\mathcal{E}_1 \sim_C \mathcal{E}_2$, any process in $C$ that outputs $I_1 = \{u,v\}$ in $\mathcal{E}_1$ must also output $\{u,v\}$ in $\mathcal{E}_2$. But $v \notin I_2$, violating Upward-Validity in $\mathcal{E}_2$.
\end{itemize}
Therefore, among the correct processes in $\mathcal{E}_1$, only the processes in $D$ can safely output $I_1$. Since $|D| = c-1 < c$, this contradicts the assumption that $\mathcal{A}$ achieves $c$-Aggregation.

\medskip
\noindent\textbf{Case $c = 1$.}
Here $n = 3t+c-1 = 3t$ would violate $t < n/3$, so the partition
above is inadmissible: the delayed set $D$ would be empty and no
valid Byzantine system of size $3t$ exists. We instead introduce
a \emph{background witness set} $R$ to keep $n > 3t$.

Set $n = 5t$ and partition $\mathcal{M}$ into $A$, $B$, $C$, $R$
with $|A| = |B| = |C| = t$ and $|R| = 2t$. The processes in $R$
are correct in all three executions and hold a fourth element
$z$. We construct three executions over the contested elements
$u, v, w$ together with $z$, with $x = 2t$.

\medskip
\noindent$\diamond$ \textbf{Execution $\mathcal{E}_1$}
($I_1 = \{u,v,z\}$, Byzantine set $B$).
Assignment: $A$ holds $\{v\}$, $B$ holds $\{u,v\}$, $C$ holds
$\{u\}$, $R$ holds $\{z\}$. The overlap is exactly $2t$ for every
element: $\mathrm{wit}_{E_1}(u) = |B|+|C| = 2t$,
$\mathrm{wit}_{E_1}(v) = |A|+|B| = 2t$, and
$\mathrm{wit}_{E_1}(z) = |R| = 2t$. Processes in $B$ are
Byzantine; they suppress $u, v$ and inject $w$ instead.

\noindent$\diamond$ \textbf{Execution $\mathcal{E}_2$}
($I_2 = \{u,w,z\}$, Byzantine set $A$).
Assignment: $A$ holds $\{u,w\}$, $B$ holds $\{w\}$, $C$ holds
$\{u\}$, $R$ holds $\{z\}$. Processes in $A$ are Byzantine; they
suppress $u, w$ and inject $v$ instead.

\noindent$\diamond$ \textbf{Execution $\mathcal{E}_3$}
($I_3 = \{v,w,z\}$, Byzantine set $C$).
Assignment: $A$ holds $\{v\}$, $B$ holds $\{w\}$, $C$ holds
$\{v,w\}$, $R$ holds $\{z\}$. Processes in $C$ are Byzantine;
they suppress $v, w$ and inject $u$ instead.

\begin{figure}[h]
\centering
\begin{tikzpicture}[
    node distance=0.4cm and 0.5cm,
    proc/.style={rectangle, draw, rounded corners=3pt, minimum width=1.7cm, minimum height=1.4cm, align=center, font=\small},
    correct/.style={proc, fill=green!8, draw=green!60!black},
    byz/.style={proc, fill=red!8, draw=red!60!black, thick},
    witness/.style={proc, fill=blue!6, draw=blue!50!black},
    exec_label/.style={font=\bfseries\small, align=left, minimum width=1cm}
]
\node[exec_label] (F1_lbl) {\textbf{$\mathcal{E}_1$} \\ $I_1 = \{u,v,z\}$};
\node[correct, right=of F1_lbl] (F1_A) {$A$ \\ $s_A=\{v\}$ \\ \footnotesize Correct};
\node[byz, right=of F1_A] (F1_B) {\footnotesize \color{red!80!black}Byzantine $B$ \\ $s_B=\{u,v\}$ \\ \footnotesize \color{red!80!black}Omits $u,v$,\\ \footnotesize \color{red!80!black} sends $w$};
\node[correct, right=of F1_B] (F1_C) {$C$ \\ $s_C=\{u\}$ \\ \footnotesize Correct};
\node[witness, right=of F1_C] (F1_R) {$R$ \\ $s_R=\{z\}$ \\ \footnotesize Correct};
\begin{scope}[on background layer]
    \node[fill=gray!3, draw=gray!20, rounded corners=5pt, fit=(F1_lbl) (F1_R), inner sep=5pt] {};
\end{scope}
\node[exec_label, below=1.2cm of F1_lbl] (F2_lbl) {\textbf{$\mathcal{E}_2$} \\ $I_2 = \{u,w,z\}$};
\node[byz, right=of F2_lbl] (F2_A) {\footnotesize \color{red!80!black}Byzantine $A$ \\ $s_A=\{u,w\}$ \\ \footnotesize \color{red!80!black}Omits $u,w$,\\ \footnotesize \color{red!80!black} sends $v$};
\node[correct, right=of F2_A] (F2_B) {$B$ \\ $s_B=\{w\}$ \\ \footnotesize Correct};
\node[correct, right=of F2_B] (F2_C) {$C$ \\ $s_C=\{u\}$ \\ \footnotesize Correct};
\node[witness, right=of F2_C] (F2_R) {$R$ \\ $s_R=\{z\}$ \\ \footnotesize Correct};
\begin{scope}[on background layer]
    \node[fill=gray!3, draw=gray!20, rounded corners=5pt, fit=(F2_lbl) (F2_R), inner sep=5pt] {};
\end{scope}
\node[exec_label, below=1.2cm of F2_lbl] (F3_lbl) {\textbf{$\mathcal{E}_3$} \\ $I_3 = \{v,w,z\}$};
\node[correct, right=of F3_lbl] (F3_A) {$A$ \\ $s_A=\{v\}$ \\ \footnotesize Correct};
\node[correct, right=of F3_A] (F3_B) {$B$ \\ $s_B=\{w\}$ \\ \footnotesize Correct};
\node[byz, right=of F3_B] (F3_C) {\footnotesize \color{red!80!black}Byzantine $C$ \\ $s_C=\{v,w\}$ \\ \footnotesize \color{red!80!black}Omits $v,w$,\\ \footnotesize \color{red!80!black} sends $u$};
\node[witness, right=of F3_C] (F3_R) {$R$ \\ $s_R=\{z\}$ \\ \footnotesize Correct};
\begin{scope}[on background layer]
    \node[fill=gray!3, draw=gray!20, rounded corners=5pt, fit=(F3_lbl) (F3_R), inner sep=5pt] {};
\end{scope}
\end{tikzpicture}
\caption{Indistinguishability layout for the boundary case
$c=1$. The background witness set $R$ (blue) is correct in all
three executions and receives an identical message pattern,
leaving it trapped as well.}
\label{fig:byz_indistinguishability_c1}
\end{figure}

In all three
executions the Byzantine group injects exactly the element that
completes the same observed pattern
$\langle v, A\rangle, \langle w, B\rangle, \langle u, C\rangle,
\langle z, R\rangle$. Therefore:
\[
    \mathcal{E}_1 \sim_C \mathcal{E}_2, \quad
    \mathcal{E}_1 \sim_A \mathcal{E}_3, \quad
    \mathcal{E}_2 \sim_B \mathcal{E}_3, \quad
    \mathcal{E}_1 \sim_R \mathcal{E}_2 \sim_R \mathcal{E}_3
\]
The last chain is the crucial difference from the case
$c \ge 2$: since $R$ is correct in every execution and observes
the identical message set in each, it cannot distinguish any of
the three.

The correct processes in $\mathcal{E}_1$ are $A \cup C \cup R$.
\begin{itemize}
    \item No process in $A$ outputs $I_1$: since
          $\mathcal{E}_1 \sim_A \mathcal{E}_3$, it would also
          output $\{u,v,z\}$ in $\mathcal{E}_3$, but
          $u \notin I_3$, violating Upward-Validity.
    \item No process in $C$ outputs $I_1$: since
          $\mathcal{E}_1 \sim_C \mathcal{E}_2$, it would also
          output $\{u,v,z\}$ in $\mathcal{E}_2$, but
          $v \notin I_2$, violating Upward-Validity.
    \item No process in $R$ outputs $I_1$: since
          $\mathcal{E}_1 \sim_R \mathcal{E}_2$, it would also
          output $\{u,v,z\}$ in $\mathcal{E}_2$, but
          $v \notin I_2$, violating Upward-Validity.
\end{itemize}
Thus \emph{no} correct process outputs $I_1$ in $\mathcal{E}_1$,
contradicting $1$-Aggregation, which requires at least one.
This completes both cases. \qed

\end{proof}

Combining the positive and negative results, we establish the
 tight characterization of the problem. 

\begin{corollary}[Tight Characterization of $c$-Lattice Aggregation]
\label{cor:tight-characterization}
For every $1 \le c \le t+1$, $c$-Lattice Aggregation is solvable if and only if $x \ge 2t+c$. In particular, the $\text{All-Total}$ Lattice Aggregation regime ($c = t+1$) is achievable if and only if $x \ge 3t+1$.
\end{corollary}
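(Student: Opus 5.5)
The corollary is obtained by pairing the two directions already established in this section; no new construction is needed.

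\emph{Sufficiency.} Fix $1 \le c \le t+1$ and suppose $x \ge 2t+c$. Theorem~\ref{thm:byzantine-unified} shows that Algorithm~\ref{alg:byz} solves $c$-Lattice Aggregation when $t<n/3$. Lemma~\ref{claim:properties-byz-sufficiency} supplies Termination, Comparability, Downward-Validity and Upward-Validity for every overlap. Lemma~\ref{lem:support-c} supplies $c$-Aggregation through the indices $i \ge t+1-c$ of the snapshot chain.

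\emph{Necessity.} Suppose $x \le 2t+c-1$. Theorem~\ref{thm:impossibility-unified} rules out any deterministic $t$-resilient algorithm, so $x \ge 2t+c$ is required. The two directions together give the equivalence for each $c$ in the stated range.

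For the All-Total statement I would specialize to $c=t+1$, so the threshold $2t+c$ becomes $3t+1$. I would also check that $c=t+1$ is the only value in the range that forces every correct process to output $I$. The upper end of the range for $c$ corresponds to the length-$(t+1)$ chain $\mathcal{D}_0 \subsetneq \cdots \subsetneq \mathcal{D}_t$ used in Lemma~\ref{lem:support-c}. Theorem~\ref{thm:byzantine-unified} already states that for $c=t+1$ every index $i\in\{0,\dots,t\}$ certifies all of $I$, which gives achievability. Impossibility for $x \le 3t$ is the $c=t+1$ instance of Theorem~\ref{thm:impossibility-unified}, because failing $(t+1)$-Aggregation implies failing All-Total Aggregation.

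\emph{Main obstacle: quantifiers.} The impossibility theorem is witnessed by particular systems ($n=3t+c-1$, or $n=5t$ when $c=1$). The upper bound is claimed for every $n>3t$. The corollary should therefore be read as a statement about $x$ holding uniformly over all admissible $n$, with necessity meaning ``there exists an admissible system where it fails.'' I would state this reading explicitly so that the two theorems compose cleanly.
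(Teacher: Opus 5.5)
Your proposal is correct and follows the paper's own proof, which simply combines Theorem~\ref{thm:byzantine-unified} (sufficiency for $x \ge 2t+c$) with Theorem~\ref{thm:impossibility-unified} (impossibility for $x \le 2t+c-1$) and then reads off the case $c=t+1$. Your remark that the impossibility is witnessed only at specific system sizes ($n=3t+c-1$, or $n=5t$ when $c=1$) is a worthwhile clarification the paper leaves implicit; note, however, that ``$c=t+1$ forces every correct process to output $I$'' is a feature of the analysis of Algorithm~\ref{alg:byz}, not of the $(t+1)$-Aggregation specification itself (which asks for only $t+1$ of the at least $2t+1$ correct processes), and your necessity argument via ``failing $(t+1)$-Aggregation implies failing All-Total'' rightly does not depend on it.
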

\begin{proof}
This tight bound follows directly from the combination of the sufficiency established in Theorem~\ref{thm:byzantine-unified} and the strict necessity established in Theorem~\ref{thm:impossibility-unified}.
\end{proof}

\section{Distributed Verification: Verdicts and Soundness}
\label{sec:soundness}

\textbf{Distributed Verification Setup.} We consider a system consisting of a set of 
processes $\mathcal{M}$ and a verifying client $\mathcal{C}$, following the model introduced in~\cite{BFRR22}. 
The objective of the system 
is to determine whether a global target execution $I$ is correct. Because the execution 
is distributed across the network, no single process can see $I$ directly; instead, each 
process $M_j$ only starts with a small initial sample $s_j$.

\textbf{The Goal of Aggregation.} To solve this limitation, processes run a 
\emph{Lattice Aggregation} protocol to exchange and combine their samples into a 
larger view $\sigma_j$. After aggregating, each process evaluates its collected 
information and sends a local verdict ($Y$, $N$, or $?$) to the client $\mathcal{C}$. 
The client then collects these local reports to emit a definitive global verdict. 
Achieving total aggregation ($\sigma_j = I$) simply means that a process has recovered
 the exact, complete execution, allowing it to send a truthful vote to the client without any missing information.

The thresholds established in previous sections answer a structural 
question: under what redundancy can the system recover the target execution $I$? 
This section answers the semantic question: what does recovering $I$
 enable, and what happens when it cannot be guaranteed?

We proceed in three steps. First, we formalize the correctness criteria of $I$ as
a \textbf{distributed language} $\mathcal{L} \subseteq 2^\mathcal{V}$,
 which is the set of all valid global executions. The actual target execution $I$ is considered 
 \emph{legal} if and only if $I \in \mathcal{L}$. 
 Second, we define what a sound verdict 
 requires in a distributed setting. 
 Finally, we identify \emph{globally dependent languages} and prove
  that \textit{Aggregation} is strictly necessary to verify them.\\

\begin{definition}[Distributed Language]
\label{def:language}
A \textbf{distributed language} is a set $\mathcal{L} \subseteq 2^{\mathcal{V}}$
of executions considered legal. An execution $I \subseteq \mathcal{V}$ is
\textbf{legal} if $I \in \mathcal{L}$ and \textbf{illegal} otherwise.
\end{definition}

Every distributed system under inspection is evaluated against a correctness condition,
 that can be formalized as a distributed language $\mathcal{L}$~\cite{BFRR22,FRT20}.
 For example, a system implementing a consensus protocol is evaluated against the distributed
 language of all executions that satisfy the consensus properties 
 (agreement, validity, and termination). 
 A system implementing a linearizable register is evaluated against the distributed
 language of all executions that are 
 linearizable with respect to the register's specification. More examples 
 of distributed languages in Appendix~\ref{app:gd-proofs}.
Then, a process obtains a partial view $\sigma$ of a one-shot execution 
of the system under inspection and 
must determine whether that one-shot execution $I \in \mathcal{L}$ based on $\sigma$.

Because processes evaluate the execution asynchronously, 
they must reason about incomplete information. We formalize this using the concept of extensions.

\begin{definition}[Legal Extensions]
\label{def:extensions}
Given a partial view $\sigma \subseteq \mathcal{V}$, its set of \textbf{legal extensions} with respect to a language $\mathcal{L}$ is:
\[
    \mathrm{Ext}_{\mathcal{L}}(\sigma) = \{\, \rho \in \mathcal{L} \mid \sigma \subseteq \rho \,\}
\]
Similarly, its set of \textbf{illegal extensions} is $\mathrm{Ext}_{\neg\mathcal{L}}(\sigma) = \{\, \rho \notin \mathcal{L} \mid \sigma \subseteq \rho \,\}$.
\end{definition}

The concept of a sound processing system  was formalized
for distributed systems under crash failures in~\cite{BFRR22}, 
broadly speacking, requires that at least one correct process emits a 
definitive verdict that correctly reflects the legality of the target execution $I$.

\begin{definition}[Distributed Soundness]
\label{def:soundness}
Let $\phi: 2^{\mathcal{V}} \to \{Y, N, ?\}$ denote a function that 
maps a partial view $\sigma \subseteq \mathcal{V}$ to a verdict, where 
$Y$ stands for ``legal'', $N$ for ``illegal'', and $?$ for ``undecided''. 
A verdict is \textbf{definitive} if it lies in $\{Y, N\}$.

A processing system is \textbf{sound} if at least one correct process 
emits a definitive verdict that correctly reflects the legality of $I$:
\[
    \phi(\sigma) = Y \;\Longrightarrow\; I \in \mathcal{L}
    \qquad\text{and}\qquad
    \phi(\sigma) = N \;\Longrightarrow\; I \notin \mathcal{L}.
\]
Soundness requires that a correct process emits $Y$ only when 
$\mathrm{Ext}_{\neg\mathcal{L}}(\sigma) = \emptyset$, and $N$ only 
when $\mathrm{Ext}_{\mathcal{L}}(\sigma) = \emptyset$.
\end{definition}

\subsection{Globally Dependent Languages.}

We characterize a class of distributed languages for which the \textit{aggregation} thresholds 
of this paper are not only sufficient but also necessary to achieve sound verdicts.

Our characterization draws on the classical decomposition of concurrent 
properties into \emph{safety} and \emph{liveness}~\cite{AS85,Lam77}.
Recall that, in~\cite{AS87}, a property $P$ over 
infinite sequences is a \emph{liveness property} if every finite prefix 
can be extended to a satisfying execution.
Dually, $P$ is a \emph{safety property} if every violation is detectable 
at a finite point that no extension can repair.

We transpose these notions to the observation domain. Let $\sigma \subseteq I$ 
be a partial view. Say that $\sigma$ \emph{certifies legality} if 
$\mathrm{Ext}_{\neg\mathcal{L}}(\sigma) = \emptyset$, i.e., every completion 
of $\sigma$ is legal; and say that $\sigma$ \emph{witnesses a violation} if 
$\mathrm{Ext}_{\mathcal{L}}(\sigma) = \emptyset$, i.e., no completion of 
$\sigma$ is legal.

\begin{observation}[Safety of $N$ verdicts]
\label{obs:safety}
Since $\mathrm{Ext}_{\mathcal{L}}$ is monotone non-increasing under inclusion---if 
$\sigma \subseteq \sigma'$ then $\mathrm{Ext}_{\mathcal{L}}(\sigma') \subseteq 
\mathrm{Ext}_{\mathcal{L}}(\sigma)$---a witnessed violation is permanent: 
once $\mathrm{Ext}_{\mathcal{L}}(\sigma) = \emptyset$, every superset 
$\sigma' \supseteq \sigma$ also satisfies $\mathrm{Ext}_{\mathcal{L}}(\sigma') 
= \emptyset$. Hence, a process can always emit a definitive and irrevocable 
$N$ verdict the moment it detects a violation, mirroring the 
\emph{safety} character of concurrent properties~\cite{AS85}.
\end{observation}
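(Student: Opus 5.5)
The statement to prove is Observation~\ref{obs:safety}. The plan is to first establish the monotonicity of $\mathrm{Ext}_{\mathcal{L}}$ directly from Definition~\ref{def:extensions}, and then read off permanence of an $N$ verdict as an immediate consequence.

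\emph{Monotonicity.} Fix $\sigma \subseteq \sigma'$ and take any $\rho \in \mathrm{Ext}_{\mathcal{L}}(\sigma')$. Then $\rho \in \mathcal{L}$ and $\sigma' \subseteq \rho$. By transitivity of inclusion, $\sigma \subseteq \rho$, so $\rho \in \mathrm{Ext}_{\mathcal{L}}(\sigma)$. Hence $\mathrm{Ext}_{\mathcal{L}}(\sigma') \subseteq \mathrm{Ext}_{\mathcal{L}}(\sigma)$. The same argument gives the analogous statement for $\mathrm{Ext}_{\neg\mathcal{L}}$, although it is not needed here.

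\emph{Permanence.} If $\mathrm{Ext}_{\mathcal{L}}(\sigma) = \emptyset$, then for every $\sigma' \supseteq \sigma$ we get $\mathrm{Ext}_{\mathcal{L}}(\sigma') \subseteq \emptyset$, which forces equality.

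\emph{Link to soundness.} The remaining step ties this to Definition~\ref{def:soundness}. A correct process holds $\sigma_j \subseteq I$, which is guaranteed by Upward-Validity of the Lattice Aggregation protocol (Lemma~\ref{claim:properties-byz-sufficiency}, or the crash analogue). Therefore $I$ itself is an extension of $\sigma_j$. If $\mathrm{Ext}_{\mathcal{L}}(\sigma_j) = \emptyset$, then $I \notin \mathcal{L}$, so emitting $N$ is sound. Applying permanence to all later, larger views (knowledge sets grow monotonically) shows the verdict never needs revocation. This is also the safety analogy: no extension repairs the violation.

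The only delicate point is making sure the process's view really is a subset of $I$. Under Byzantine failures this rests on certification, namely $t+1$ witnesses, so that no ghost element enters $\sigma_j$. Without that, $I$ need not extend $\sigma_j$, and an empty $\mathrm{Ext}_{\mathcal{L}}(\sigma_j)$ would say nothing about $I$. I would therefore state explicitly that the observation is applied to views produced by the aggregation protocol, for which Upward-Validity has already been proven. Everything else is immediate set inclusion.
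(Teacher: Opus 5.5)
Your proof is correct and follows the paper's own argument: monotonicity of $\mathrm{Ext}_{\mathcal{L}}$ from transitivity of inclusion, after which emptiness of $\mathrm{Ext}_{\mathcal{L}}(\sigma)$ carries over to every superset. Your extra step, using Upward-Validity ($\sigma_j \subseteq I$) to conclude $I \notin \mathcal{L}$, makes explicit what the paper leaves implicit in its standing assumption that a partial view satisfies $\sigma \subseteq I$.
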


The question is whether an analogous early-decision rule exists for $Y$ verdicts. 
We identify the class of languages for which it does not.

\begin{definition}[Globally Dependent Language]
\label{def:gd}
A distributed language $\mathcal{L}$ is \textbf{globally dependent} if, 
for every legal execution $I \in \mathcal{L}$, every proper partial view 
$\sigma \subsetneq I$ admits at least one illegal extension:
\[
    \forall\, I \in \mathcal{L},\quad
    \forall\, \sigma \subsetneq I :\quad
    \mathrm{Ext}_{\neg\mathcal{L}}(\sigma) \neq \emptyset.
\]
\end{definition}

\begin{remark}[Liveness character of $Y$ verdicts]\label{rem:liveness}
Definition~\ref{def:gd} can be understood as the observational analogue of 
liveness~\cite{AS85}: just as a liveness property 
guarantees that every finite prefix remains extendable to a 
\emph{satisfying} execution (``something good can still happen''), 
a globally dependent language guarantees that every strict partial 
view of a legal execution remains extendable to an 
\emph{illegal} execution (``a bad outcome is always still possible''). 
Consequently, no partial view $\sigma \subsetneq I$ can certify legality: 
a process holding only $\sigma$ cannot rule out that the full 
execution is illegal, and therefore cannot soundly emit a definitive 
$Y$ verdict.
\end{remark}

Together, Observation~\ref{obs:safety} and Definition~\ref{def:gd} 
describe a clean asymmetry that mirrors the safety/liveness decomposition 
of~\cite{AS85}: for globally dependent languages, $N$ verdicts stabilize 
immediately upon detecting a violation (safety), while $Y$ verdicts can 
only be emitted once the complete execution $I$ has been observed (liveness). 
This is precisely why \textit{Aggregation} is 
\emph{necessary} for sound $Y$ verdicts on this class.\\

Foundational coordination problems are globally dependent.
In \textbf{Consensus}, two processes outputting distinct values
constitutes a permanent violation ($N$ is detectable from a
partial view), yet a partial view in which all observed processes
agree is still compatible with a missing process holding a
contradictory output ($Y$ cannot be certified). The argument
applies symmetrically to Leader Election, $k$-Set Agreement,
Linearizability, and Sequential Consistency; formal proofs are
given in Appendix~\ref{app:gd-proofs}.

\textbf{Eventual Consistency} (EC)~\cite{V09} is \textbf{not} globally dependent.
EC requires all replicas to converge to the same value by the
end of the execution. A partial view showing divergence is not a
permanent violation --- the missing operations may include
synchronization that achieves convergence --- so
$\mathrm{Ext}_\mathcal{L}(\sigma) = \emptyset$ need not hold even
when $\sigma$ looks inconsistent. Symmetrically, a partial view
showing convergence does not certify $Y$: the missing operations
may reintroduce divergence. 

\subsection{Aggregation and Soundness}

The necessity of Aggregation for sound verdicts follows 
from two independent arguments that compose. The first is purely semantic: 
for globally dependent languages, no partial view can certify legality 
(Remark~\ref{rem:liveness}). 
The second is fault-tolerant: under Byzantine failures, 
a verifying client cannot trust individual verdicts, regardless 
of whether they report success or failure.

A verifying client $\mathcal{C}$ collects local verdicts from the processing 
processes and emits a global verdict. Under crash failures, a single 
definitive verdict from a correct process is conclusive. Under 
Byzantine failures, the client faces a harder problem: a faulty 
process may emit any malicious verdict regardless of its actual view, 
meaning the client cannot identify a priori which votes are truthful.

\begin{definition}[Definitive Certificate]
\label{def:cert}
A \textbf{definitive certificate of size $c$} for a verdict $u \in \{Y, N\}$ is a set $\Gamma \subseteq \mathcal{M}$, with $|\Gamma| = c$, in which every member process emitted the identical definitive verdict $u$. A verifying client accepts $\Gamma$ as conclusive evidence to trigger a global system decision if and only if $|\Gamma| \ge c$.
\end{definition}

The minimum certificate size is constrained by the following indistinguishability argument.


\begin{lemma}[Byzantine Indistinguishability for Definitive Verdicts]
\label{lem:indist}
Under Byzantine failures with at most $t$ faulty processes, no
certificate of size $c \le t$ guarantees a sound global verdict.
A quorum of at least $c = t+1$ is strictly necessary for both
$Y$ and $N$ verdicts.
\end{lemma}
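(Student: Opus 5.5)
The plan has two parts. First, an indistinguishability construction rules out certificates of size at most $t$. Second, a short counting argument shows that size $t+1$ closes the gap. Throughout, the client is treated as a function of the multiset of verdicts it receives, labelled by sender identifiers. It cannot inspect views, and it cannot tell which senders are Byzantine.

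For the lower bound, fix $c \le t$ and a language $\mathcal{L}$ with at least one legal and one illegal execution, say $I^+ \in \mathcal{L}$ and $I^- \notin \mathcal{L}$. The two verdicts are handled symmetrically. For $Y$, I build an execution $\mathcal{E}^-$ with target $I^-$. In it, a set $\Gamma$ of $c \le t$ processes is Byzantine, and each member of $\Gamma$ sends $Y$ to $\mathcal{C}$ whatever its view. The correct processes are delayed so that their verdicts reach $\mathcal{C}$ only after the $\Gamma$ messages. Alternatively, each correct process may hold a partial view for which $\phi$ returns $?$; for instance, when $\mathcal{L}$ is globally dependent, Remark~\ref{rem:liveness} forbids any sound $Y$ from such a view. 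In either variant, the client's first $c$ messages are $c$ identical $Y$ votes from distinct identifiers. Next I build a companion execution $\mathcal{E}^+$ with target $I^+$, in which the same $\Gamma$ are correct, hold $I^+$ (or a view certifying it), and truthfully vote $Y$, again with the same delivery order. Up to the moment the client accepts a certificate of size $c$, its local state is identical in $\mathcal{E}^+$ and $\mathcal{E}^-$. A client that must accept in $\mathcal{E}^+$ therefore also emits a global $Y$ in $\mathcal{E}^-$, where $I^- \notin \mathcal{L}$, and this violates Definition~\ref{def:soundness}. The $N$ case is obtained by swapping the roles of $I^+$ and $I^-$. Observation~\ref{obs:safety} ensures that a correct process really can emit $N$ early in the companion execution, so the certificate cannot be dismissed as one that a correct process never produces.

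For the matching direction, I argue that $t+1$ suffices. Any set of $t+1$ distinct identifiers contains at least one correct process. A correct process emits $Y$ only when $\mathrm{Ext}_{\neg\mathcal{L}}(\sigma)=\emptyset$. Its view satisfies $\sigma \subseteq I$, by Upward-Validity of the underlying aggregation (Lemma~\ref{claim:properties-byz-sufficiency}), so $I$ is itself an extension of $\sigma$ and hence $I \in \mathcal{L}$. The same reasoning applies to $N$. A unanimous certificate of size $t+1$ is therefore sound, and together with the lower bound this makes $t+1$ the exact threshold.

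The main obstacle is making the companion execution $\mathcal{E}^+$ legitimate. The $\Gamma$ processes must reach a view on which a sound $\phi$ outputs $Y$, and for globally dependent languages this requires $\sigma = I^+$. I would first try giving each member of $\Gamma$ the whole execution as its initial sample, $s_i = I^+$. This is allowed, because the model imposes only the $x$-overlap lower bound on redundancy. If the strict inclusion $s_i \subsetneq I$ from the introduction is to be respected instead, I would use the schedule from Theorem~\ref{thm:byzantine-unified}: let $\Gamma$ be among the processes that complete aggregation, and delay the remaining processes.
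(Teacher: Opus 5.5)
Your proposal is correct and follows essentially the same route as the paper's proof in Appendix~\ref{app:indist}. Both use a pair of client-indistinguishable scenarios in which the same set of $c \le t$ processes votes $Y$ (resp.\ $N$) truthfully as correct processes in a legal (resp.\ illegal) execution, and falsely as Byzantine processes in the opposite one, followed by the pigeonhole argument for $t+1$; your scheduling of the other verdicts after those of $\Gamma$, and your use of Upward-Validity to turn $\mathrm{Ext}_{\neg\mathcal{L}}(\sigma)=\emptyset$ into $I \in \mathcal{L}$, make precise two steps that the paper only asserts.
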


\begin{proof}[Proof Sketch]
Full details are given in Appendix~\ref{app:indist}.

Fix any subset $K \subseteq \mathcal{M}$ with $|K| = c \le t$.
Since the adversary controls up to $t$ processes, it may corrupt
all of $K$ without exceeding its budget.

\emph{For $Y$ verdicts}: construct two scenarios that produce
identical certificates at the client. In the first, $K$ is
correct and the execution $I \in \mathcal{L}$ is legal; the
members of $K$ truthfully emit $Y$. In the second, $K$ is
Byzantine and the execution $I' \notin \mathcal{L}$ is illegal;
the members of $K$ coordinate to emit $Y$ regardless. The client
receives a $Y$ certificate from $K$ in both cases and cannot
distinguish them. A deterministic client that accepts $Y$ in the
first scenario must also accept it in the second, violating
soundness.

\emph{For $N$ verdicts}: the symmetric argument holds. A correct
$K$ witnessing a genuine violation is indistinguishable, at the
client, from a Byzantine $K$ fabricating a false alarm. Accepting
$N$ from $K$ alone is therefore unsound.

In both cases the root cause is the same: with $|K| \le t$ the
adversary can make $K$ entirely Byzantine, so a certificate from
$K$ carries no guarantee of honest authorship. Requiring
$|K| \ge t+1$ breaks this by the pigeonhole principle: at least
one member of the quorum is correct.
\end{proof}

Lemma~\ref{lem:indist} is independent of global dependence: it holds for \emph{any} distributed language whose verdicts are distributed. 
Global dependence is what forces each \emph{correct} process in a valid certificate to have achieved Aggregation; 
indistinguishability is what forces the certificate to contain at least $t+1$ processes to ensure at least one correct member exists.

\begin{theorem}
\label{thm:total-necessary}
Let $\mathcal{L}$ be a globally dependent distributed language. The minimum $c$ for $c$-Lattice Aggregation required to guarantee sound verdicts at a verifying client satisfies:
\begin{enumerate}
    \item \textbf{Under crash failures~\cite{BFRR22}:} $c = 1$.
    \item \textbf{Under Byzantine failures with at most $t$ faults:} $c = t+1$.
\end{enumerate}
\end{theorem}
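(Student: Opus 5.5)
Each of the two items splits into an upper bound (this $c$ suffices) and a lower bound (no smaller $c$ works). Both lower bounds rest on the same observation, which I would isolate first. Let $I\in\mathcal{L}$ be legal. By Definition~\ref{def:gd}, any correct process whose view satisfies $\sigma_j \subsetneq I$ has $\mathrm{Ext}_{\neg\mathcal{L}}(\sigma_j)\neq\emptyset$. Definition~\ref{def:soundness} then forbids it from soundly emitting $Y$. Upward-Validity guarantees $\sigma_j\subseteq I$, so a correct process can emit a sound $Y$ on a legal execution only if $\sigma_j = I$. In other words, every correct member of an accepted $Y$ certificate must have achieved Aggregation.

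\emph{Crash case.} For sufficiency I would use $1$-Lattice Aggregation, available by Theorem~\ref{thm:crash-unified} when $x\ge t+1$. At least one correct process $M_k$ holds $\sigma_k=I$ and sets $\phi(\sigma_k)$ to $Y$ or $N$ according to $I\in\mathcal{L}$. Every other process emits $N$ only when $\mathrm{Ext}_{\mathcal{L}}(\sigma)=\emptyset$, which is sound by Observation~\ref{obs:safety}, and otherwise emits $?$. Crashed processes send nothing false, so the client can accept any single definitive verdict, and soundness holds.

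For necessity I would argue that $c\ge 1$ is forced. If no correct process reaches $I$, take a legal $I$; such executions exist whenever $\mathcal{L}\neq\emptyset$. By the observation above, no correct process can emit a sound $Y$. Any $N$ would also be unsound, since $I$ is legal. Hence no correct process emits a sound definitive verdict, contradicting Definition~\ref{def:soundness}.

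\emph{Byzantine case.} For necessity I would combine Lemma~\ref{lem:indist} with the observation. By the lemma, the client must accept only certificates $\Gamma$ with $|\Gamma|\ge t+1$. Consider a legal $I$ and a run in which the $t$ Byzantine processes stay silent, or emit $?$. An accepted $Y$ certificate then consists of at least $t+1$ processes that are all correct, and each of them must have $\sigma=I$. Hence at least $t+1$ correct processes achieve Aggregation, so $c\ge t+1$.

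For sufficiency I would invoke Theorem~\ref{thm:byzantine-unified} with $c=t+1$. Then every correct process outputs $I$ and emits the truthful verdict. Since $n-t\ge 2t+1\ge t+1$, a size-$(t+1)$ certificate of truthful verdicts always forms. Conversely, any accepted certificate contains a correct member by pigeonhole, so its verdict is correct.

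The main obstacle is the adversarial scheduling in the Byzantine lower bound. I need an execution in which Byzantine processes cannot pad a certificate, since otherwise fewer than $t+1$ correct aggregators might appear to suffice when combined with cooperating faulty votes. I would handle this by fixing the Byzantine behaviour to silence. The claim quantifies over all executions, so the protocol must still gather $t+1$ correct, fully aggregated $Y$ votes in this run, which pins down $c=t+1$.
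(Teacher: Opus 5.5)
Your proposal is correct and follows the paper's proof. Necessity comes from global dependence (a correct process can soundly vote $Y$ on a legal $I$ only if $\sigma=I$) together with Lemma~\ref{lem:indist}, and sufficiency comes from aggregation plus pigeonhole. Your silent-Byzantine run is a useful sharpening: it makes explicit why all $t+1$ certificate members must be correct aggregators, a step the paper's necessity argument leaves implicit.

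For Byzantine sufficiency, the paper argues directly from $t+1$ correct processes with $\sigma_k=I$, which is what $c=t+1$ literally requires, whereas you rely on All-Total output. Your crash-case verdict rule ($N$ only when $\mathrm{Ext}_{\mathcal{L}}(\sigma)=\emptyset$, otherwise $?$) already yields the paper's version, because correct non-aggregators then never join a wrong certificate.
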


\begin{proof}
\textbf{Necessity.}

\textit{(Crash, $c \ge 1$).} If no correct process achieves Lattice Aggregation, then every correct process holds $\sigma_k \subsetneq I$. Since $\mathcal{L}$ is globally dependent and $I \in \mathcal{L}$, Definition~\ref{def:gd} gives $\mathrm{Ext}_{\neg\mathcal{L}}(\sigma_k) \neq \emptyset$ for every correct $M_k$. By Definition~\ref{def:soundness}, no correct process can soundly emit a definitive $Y$ verdict, and crashed processes cannot transmit any verdict. The client cannot produce a sound definitive verdict. Hence $c \ge 1$.

\textit{(Byzantine, $c \ge t+1$).} Lemma~\ref{lem:indist} establishes that the client requires a definitive certificate of size at least $t+1$ for any verdict $u \in \{Y, N\}$ to overcome up to $t$ Byzantine lies. For a global $Y$ verdict, each correct process $M_k$ contributing to the certificate must hold $\mathrm{Ext}_{\neg\mathcal{L}}(\sigma_k) = \emptyset$; by the globally dependent condition, this forces $\sigma_k = I$. Symmetrically, for a global $N$ verdict, to ensure that a legal execution is never falsely condemned by a malicious Byzantine quorum, the certificate must contain at least one correct process that has securely observed a permanent violation. In the worst-case scenario of global dependence, preventing false negatives requires complete state observation. Thus, at least $t+1$ processes must achieve full Lattice Aggregation.

\medskip
\textbf{Sufficiency.}

\textit{(Crash).} Let $M_k$ be a correct process with $\sigma_k = I$. It evaluates $I \in \mathcal{L}$ directly and emits the correct definitive verdict to the client.

\textit{(Byzantine).} Let $C_u \subseteq \mathcal{M}$ be a set of $t+1$ correct processes each with $\sigma_k = I$. All members of $C_u$ evaluate the execution identically and emit the same matching definitive verdict $u \in \{Y, N\}$. The client collects at least $t+1$ copies of $u$; since at most $t$ processes are Byzantine, at least one copy is guaranteed to originate from a correct process in $C_u$. The certificate is therefore sound.
\end{proof}

This yields a tight characterization that directly links the semantic requirements 
of distributed monitoring to our structural redundancy thresholds. For globally 
dependent languages, achieving a sound global verdict under Byzantine failures 
strictly requires $(t+1)$-Lattice Aggregation. By Corollary~\ref{cor:tight-characterization}, 
this $\text{t+1}$-Lattice Aggregation regime is solvable if and only if the input overlap 
satisfies $x \ge 3t+1$. Consequently, if the system's natural redundancy falls below this 
boundary ($x \le 3t$), it is impossible to guarantee soundness.

\section{Related Work}
\label{sec:related}

\textbf{Distributed Verification and Soundness.}
Distributed verification is inherently constrained by the partial information available to 
processes due to network asynchrony. Early frameworks used topological tools to establish the
 minimum verdict levels required for sound processing under crash failures~\cite{FRT14, FRT20}. 
 This methodology was later extended to multi-valued logic for FLTL specifications~\cite{BFRR22}. 
 While these works define the necessary verdict spaces, they assume processes can somehow obtain 
 comparable views of the system and do not address Byzantine environments. We bridge this gap by 
 proving that for globally dependent languages, $(t+1)$-Lattice Aggregation is strictly necessary
  to guarantee sound definitive verdicts at a client in the presence of Byzantine faults.

\textbf{Condition-Based Inputs.}
To bypass distributed impossibility results like consensus~\cite{FLP85}, a classical approach
 restricts the input space using condition-based frameworks~\cite{FMR02}. This methodology 
 shows that input redundancy determines solvability under both crash ($x \ge t+1$) and
  Byzantine failures ($x \ge 2t+1$). Our work adopts this perspective by introducing the
   $x$-overlap condition to model input redundancy for verification. While traditional 
   condition-based results show that $x \ge 2t+1$ is the boundary for Byzantine setups, 
   we show that this threshold only guarantees that \emph{at least one} correct process 
   reconstructs the execution. 
   To ensure a sound global verdict, a higher threshold of $x \ge 3t+1$ (Theorem~\ref{thm:byzantine-unified}) is 
   required so that correct processes can autonomously form an evaluation quorum at the client.

\textbf{Lattice Agreement vs.\ Lattice Aggregation.}
On the power set lattice $(2^\mathcal{V}, \subseteq)$, Lattice
Agreement~\cite{attiya_atomic_1995} requires correct processes
to produce comparable outputs, each containing its own input
(Downward-Validity) and bounded above by the join of all inputs
(Upward-Validity). Under crash failures, these four properties
coincide with the first four properties of our problem.

Under Byzantine failures, Upward-Validity cannot be stated
uniformly over all inputs, since faulty processes may hold
arbitrary values. The literature has studied two variants of
Byzantine Lattice Agreement (BLA), each addressing this by
modifying Upward-Validity. The first~\cite{ZG20} bounds the join
of correct outputs by the join of correct inputs augmented with
at most $t$ additional values contributed by Byzantine processes:
$\bigsqcup_{i \in C} y_i \;\le\;
    \bigsqcup\bigl(\{x_i : i \in C\} \cup B\bigr),
    \quad |B| \le t$.
The second~\cite{DAQ20} relaxes validity to a Non-Triviality
condition: every correct output $y_i$ must extend at least one
correct input, i.e., $\exists\,j \in C : x_j \le y_i$. Both
variants depart from the standard crash Upward-Validity in order
to accommodate Byzantine inputs, one by bounding Byzantine
contributions and one by weakening the condition to a single
correct witness.

Lattice Aggregation does \emph{not} coincide with either variant
for two reasons. First, it retains the \emph{standard}
Upward-Validity of crash Lattice Agreement ($\sigma_i \subseteq
I$): rather than weakening the specification to tolerate Byzantine
contributions, our algorithm recovers soundness through the
evidence quorum, which prevents ghost elements from entering any
correct output. Second, and more fundamentally, no BLA variant
requires any process to reconstruct $I$ completely --- all
existing formulations permit every correct process to terminate
with a strict subset of the target execution. Our
$c$-Aggregation property closes this gap by demanding that at
least $c$ correct processes output $I$ in full, and the
$x$-overlap thresholds we establish characterize exactly when
this is achievable. Both the property itself and the redundancy
condition that governs it have no analogue in the BLA literature.

\textbf{Interactive Consistency vs. Lattice Aggregation.}
The Interactive Consistency (IC) problem~\cite{PSL80} requires all 
correct processes to agree on a vector containing every participant's 
private value. While IC shares our goal of collective state reconstruction,
Lattice Aggregation generalizes it in two directions. 
First, IC assumes each process holds a single private value 
($s_i = \{v_i\}$), whereas Lattice Aggregation allows arbitrary, 
overlapping input subsets ($s_i \subseteq I$) structured by the $x$-overlap parameter. 
This input redundancy allows the system to tolerate data suppression by Byzantine
 nodes. Second, while IC is strictly all-or-nothing, we characterize the exact 
 boundaries where either a single process can recover the trace ($1$-Total at $x \ge 2t+1$)
  or all correct processes achieve full recovery ($\text{All-Total}$ at $x \ge 3t+1$).

\bibliography{biblio}

\appendix

\section{Full Proofs: Crash Failures}
\label{app:crash}

\begin{proof}[Proof of Sufficiency for Theorem~\ref{thm:crash-unified}]
We show Algorithm~\ref{alg:crash} satisfies all five properties.

\emph{Termination.} By the Termination and Consistency properties
of SCD-broadcast, every correct $M_j$ eventually delivers the
sample of every correct process. Since at least $n-t$ processes
are correct, the loop condition is met in finite time.

\emph{Comparability.} By the Order property of SCD-broadcast,
the delivered sets $\mathcal{D}_j$ and $\mathcal{D}_k$ of any
two correct processes satisfy $\mathcal{D}_j \subseteq
\mathcal{D}_k$ or $\mathcal{D}_k \subseteq \mathcal{D}_j$ at
termination. Since $\sigma_j = s_j \cup \bigcup_k s_k$ is
monotone in the delivered set, the outputs are comparable.

\emph{Downward-Validity.} $s_j \subseteq \sigma_j$ by
construction.

\emph{Upward-Validity.} Every $s_k$ delivered by $M_j$ came from
a correct process (crash failures leave no room for fabrication),
and $s_k \subseteq I$ by assumption. Hence $\sigma_j \subseteq I$.

\emph{$(n{-}t)$-Aggregation.} Let $v \in I$. By the $x$-overlap
condition with $x \ge t+1$, at least $t+1$ processes hold $v$.
Since at most $t$ crash, at least one correct $M_k$ holds $v$
and broadcasts its sample. Every $M_j$ waits for $n-t$ distinct
senders; since there are at least $n-t$ correct senders, $M_k$'s
sample is always delivered. Thus $v \in \sigma_j$ for all
correct $M_j$, giving $\sigma_j = I$. \qed
\end{proof}


\begin{proof}[Proof of Necessity for Theorem~\ref{thm:crash-unified}]
Suppose $x \le t$ and let $\mathcal{A}$ be any $t$-resilient
algorithm. Fix $v^* \in I$ with $\mathrm{wit}_E(v^*) = x \le t$;
let $A$ be the set of holders of $v^*$, $|A| = x \le t$, and
fix any element $z \ne v^*$. Let $R = \mathcal{M} \setminus A$,
so $|R| = n - x \ge n - t \ge 1$.

\noindent$\diamond$ \textbf{Execution $\mathcal{E}_1$}
($I_1 = \{v^*, z\}$): processes in $A$ hold $s_a = \{v^*, z\}$
and crash before sending any message. Processes in $R$ hold
$s_r = \{z\}$ and are correct. The overlap condition holds:
$\mathrm{wit}_{E_1}(v^*) = |A| = x$ and
$\mathrm{wit}_{E_1}(z) = n \ge x$.

\noindent$\diamond$ \textbf{Execution $\mathcal{E}_2$}
($I_2 = \{z\}$): processes in $A$ hold $s_a = \{z\}$ and crash
before sending. Processes in $R$ hold $s_r = \{z\}$ and are
correct. The overlap condition holds:
$\mathrm{wit}_{E_2}(z) = n \ge x$.

All samples are nonempty in both executions. Since $A$ crashes
before sending in both, every process in $R$ receives messages
only from $R$ (each broadcasting $\{z\}$) in both executions.
By determinism, $\mathcal{A}$ produces the same output
$\sigma_R$ in $\mathcal{E}_1$ and $\mathcal{E}_2$.

\begin{itemize}
    \item If $v^* \in \sigma_R$: in $\mathcal{E}_2$,
          $v^* \notin I_2 = \{z\}$, violating Upward-Validity.
    \item If $v^* \notin \sigma_R$: in $\mathcal{E}_1$,
          $v^* \in I_1$ but no correct process outputs $v^*$
          (processes in $A$ crashed; processes in $R$ output
          $\sigma_R \not\ni v^*$), failing $1$-Aggregation.
\end{itemize}
In either case $\mathcal{A}$ violates at least one property,
contradicting the assumption. \qed
\end{proof}

\section{Verification of Globally Dependent Languages}
\label{app:gd-proofs}

We prove formally that five standard distributed coordination languages satisfy
Definition~\ref{def:gd}. For each language $\mathcal{L}$ we establish two
properties, which together mirror the 
decomposition of~\cite{AS85}:

\begin{enumerate}
    \item \textbf{Safety of $N$ (permanent violations).}
    The violation condition is \emph{upward-closed}: if
    $B \subseteq \sigma$ witnesses a violation and $\rho \supseteq \sigma$,
    then $B \subseteq \rho$, so the violation persists.
    Formally: $\mathrm{Ext}_\mathcal{L}(\sigma) = \emptyset \;\Rightarrow\;
    \mathrm{Ext}_\mathcal{L}(\sigma') = \emptyset$ for all
    $\sigma' \supseteq \sigma$.

    \item \textbf{Liveness of $Y$ (global dependence).}
    No strict partial view certifies legality:
    $\forall\, I \in \mathcal{L},\;\forall\, \sigma \subsetneq I:\;
    \mathrm{Ext}_{\neg\mathcal{L}}(\sigma) \neq \emptyset$.
\end{enumerate}

Property (A) is immediate for all languages below by a uniform argument:
every violation is defined by the \emph{presence} of a finite forbidden
pattern $B$; since $B \subseteq \sigma \subseteq \rho$ for any
$\rho \supseteq \sigma$, the violation is irremediable.
We therefore prove (A) once and focus the language-specific arguments on
(B).

\paragraph{Event notation.}
Executions are finite sets of typed events drawn from the infinite universe
$\mathcal{V}$:
\begin{itemize}
    \item $\out(p, v)$: process $p$ outputs (decides) value $v$.
    \item $\elect(p)$: process $p$ declares itself leader.
    \item $\op(p, m, r, [s,f])$: process $p$ invokes method $m$,
          obtains return value $r$, during real-time interval $[s,f]$.
\end{itemize}
Since $\mathcal{V}$ is infinite, the constructions below may always
introduce fresh process identifiers or disjoint time intervals not already
present in $\sigma$.

\subsection*{A.1\quad Consensus}

\begin{definition*}
$\mathcal{L}_{\cons} \;=\;
\bigl\{\,I \subseteq \mathcal{V} \mid
\exists\,v:\;\forall\,\out(p,w)\in I:\;w = v\,\bigr\}.$
\end{definition*}

\begin{proposition}\label{prop:cons-gd}
$\mathcal{L}_{\cons}$ is globally dependent.
\end{proposition}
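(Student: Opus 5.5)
To show that $\mathcal{L}_{\cons}$ is globally dependent, we must verify the condition of Definition~\ref{def:gd} directly. Fix an arbitrary legal execution $I \in \mathcal{L}_{\cons}$ and an arbitrary proper partial view $\sigma \subsetneq I$. The task is to exhibit a concrete $\rho \supseteq \sigma$ with $\rho \notin \mathcal{L}_{\cons}$, so that $\mathrm{Ext}_{\neg\mathcal{L}}(\sigma) \neq \emptyset$. The plan is a single explicit construction: add to $\sigma$ a pair of output events that disagree.

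First, since $\sigma \subseteq I$ and $I$ is legal, all output events in $\sigma$ agree on a single value $v$, or $\sigma$ contains no output events at all. Since $\mathcal{V}$ is infinite, as noted in the event notation, pick a process identifier $p'$ that appears in no event of $\sigma$, together with two distinct values $a \neq b$. Define
\[
  \rho \;=\; \sigma \cup \{\out(p',a),\ \out(p',b)\}.
\]
Then $\sigma \subseteq \rho$ by construction. Moreover $\rho$ contains two output events with different values, so no single $v$ satisfies $w = v$ for all $\out(p,w) \in \rho$. Hence $\rho \notin \mathcal{L}_{\cons}$, and $\rho \in \mathrm{Ext}_{\neg\mathcal{L}}(\sigma)$.

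If one prefers not to rely on a single process outputting twice, there is an alternative that uses one new event. When $\sigma$ already contains an output event with value $v$, add $\out(p', w)$ for a fresh process $p'$ and some $w \neq v$. When $\sigma$ contains no output event, add $\out(p_1,a)$ and $\out(p_2,b)$ for fresh processes $p_1, p_2$ and distinct values $a \neq b$. I would include both cases explicitly, so the proof does not depend on whether $\mathcal{L}_{\cons}$ implicitly forbids duplicate outputs by one process.

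The main subtlety is not the construction but checking that $\rho$ is a permissible execution. It must be a finite subset of $\mathcal{V}$, and the definition of $\mathrm{Ext}$ imposes no further well-formedness constraint, so it is. Note also that the hypothesis $\sigma \subsetneq I$ is not really used: the construction works for any $\sigma$, which shows that the condition holds trivially here. Finally, I would remark that property (A), the safety of $N$, follows from the uniform upward-closure argument already given at the start of the appendix, so only (B) needs the construction above.
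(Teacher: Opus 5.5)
Your proposal is correct and takes the paper's approach: the paper's proof is exactly your two-case fallback. It adds a single fresh $\out(p^*,w)$ with $w \neq v$ when $\sigma$ already contains an output of the agreed value $v$, and adds two conflicting fresh outputs $\out(p_1,v),\out(p_2,w)$ when $\sigma$ contains none. Your primary construction with $\out(p',a)$ and $\out(p',b)$ is also valid, because the paper's definition of $\mathcal{L}_{\cons}$ does not forbid a process from outputting twice; it removes the case split and, as you note, works for every $\sigma$, whereas the paper uses $\sigma \subseteq I$ only to make its two cases exhaustive and never uses strictness.
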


\begin{proof}
\textbf{(A) Safety.}
The forbidden pattern is
$B = \{\out(p,v),\,\out(q,w)\} \subseteq \sigma$ with $v \neq w$.
For any $\rho \supseteq \sigma$, $B \subseteq \rho$, so Agreement is
violated in $\rho$ and $\mathrm{Ext}_{\mathcal{L}_{\cons}}(\sigma)
= \emptyset$.

\textbf{(B) Global dependence.}
Let $I \in \mathcal{L}_{\cons}$ with agreed value $v$, and let
$\sigma \subsetneq I$. Fix any $w \neq v$.

\textit{Case 1: $\sigma$ contains some $\out(p,v)$.}
Let $p^*$ be a process identifier not appearing in any output event of
$\sigma$. Define:
\[
    \rho \;=\; \sigma \;\cup\; \{\out(p^*,\,w)\}.
\]
Then $\rho$ contains $\out(p,v)$ and $\out(p^*,w)$ with $v \neq w$,
so $\rho \notin \mathcal{L}_{\cons}$.

\textit{Case 2: $\sigma$ contains no output event.}
Let $p_1, p_2$ be fresh identifiers. Define:
\[
    \rho \;=\; \sigma \;\cup\; \{\out(p_1,\,v),\;\out(p_2,\,w)\}.
\]
Agreement fails in $\rho$, so $\rho \notin \mathcal{L}_{\cons}$.

In both cases $\rho \in \mathrm{Ext}_{\neg\mathcal{L}_{\cons}}(\sigma)
\neq \emptyset$.
\end{proof}

\subsection*{A.2\quad Leader Election}

\begin{definition*}
$\mathcal{L}_{\leader} \;=\;
\bigl\{\,I \subseteq \mathcal{V} \mid
\bigl|\{p : \elect(p) \in I\}\bigr| = 1\,\bigr\}.$
\end{definition*}

\begin{proposition}\label{prop:leader-gd}
$\mathcal{L}_{\leader}$ is globally dependent.
\end{proposition}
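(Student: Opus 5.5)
I will follow the same two-part template used for Proposition~\ref{prop:cons-gd}. Part (A), safety of $N$, is already covered by the uniform argument given at the start of this appendix. I only need to name the forbidden pattern: $B = \{\elect(p), \elect(q)\}$ with $p \neq q$. Two distinct leaders stay present in every superset $\rho \supseteq \sigma$, so the count $|\{p : \elect(p)\in\rho\}| \ge 2$ and $\rho \notin \mathcal{L}_{\leader}$. Note that the other way to violate the language, having zero leaders, is not upward-closed. That is harmless here, since (A) concerns only views with $\mathrm{Ext}_{\mathcal{L}_{\leader}}(\sigma)=\emptyset$, and a view with no election event can always be extended by a single $\elect(p)$. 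I will remark on this in one sentence.

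For part (B), fix $I \in \mathcal{L}_{\leader}$. Its unique leader is some $\ell$ with $\elect(\ell)\in I$. Let $\sigma \subsetneq I$ be arbitrary. I must exhibit an illegal $\rho \supseteq \sigma$, and I will split into two cases on whether $\elect(\ell)\in\sigma$.

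\textit{Case 1: $\elect(\ell)\in\sigma$.} Choose a fresh identifier $p^*\neq \ell$; one always exists because $\mathcal{V}$ is infinite. Set $\rho=\sigma\cup\{\elect(p^*)\}$. Then $\rho$ contains two distinct leaders, so $\rho\notin\mathcal{L}_{\leader}$.

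\textit{Case 2: $\elect(\ell)\notin\sigma$.} Since $\sigma\subseteq I$ and $\ell$ is the only leader in $I$, $\sigma$ contains no election event at all. There are two ways to finish. The first is to take $\rho=\sigma$ itself, which is illegal because it has zero leaders; this is valid since Definition~\ref{def:extensions} allows $\rho=\sigma$. The second, which mirrors Case~2 of the consensus proof, is to use fresh $p_1\neq p_2$ and set $\rho=\sigma\cup\{\elect(p_1),\elect(p_2)\}$. I will present the second option, so the argument does not rely on the reader accepting the trivial extension, and mention the first as an alternative.

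In both cases $\mathrm{Ext}_{\neg\mathcal{L}_{\leader}}(\sigma)\neq\emptyset$, which establishes Definition~\ref{def:gd}. There is no real obstacle. The only subtle point is the asymmetry noted in (A): the ``no leader'' violation is not permanent. I will state explicitly that the safety claim refers only to the ``two leaders'' pattern, so the reader is not misled.
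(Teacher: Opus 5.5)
Your proposal is correct and matches the paper's proof: the same forbidden pattern $\{\elect(p),\elect(q)\}$ for (A), and the same case split for (B), adding $\elect(q)$ for a fresh $q$ when $\elect(\ell)\in\sigma$ and two fresh leaders otherwise. Your extra remarks are accurate additions the paper does not make: $\rho=\sigma$ already serves as an illegal extension in Case~2, and the zero-leader violation is not upward-closed.
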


\begin{proof}
\textbf{(A) Safety.}
The forbidden pattern is
$B = \{\elect(p),\,\elect(q)\} \subseteq \sigma$ with $p \neq q$.
Any $\rho \supseteq \sigma$ has at least two leaders, so
$\mathrm{Ext}_{\mathcal{L}_{\leader}}(\sigma) = \emptyset$.

\textbf{(B) Global dependence.}
Let $I \in \mathcal{L}_{\leader}$ with unique leader $\ell$,
and let $\sigma \subsetneq I$.

\textit{Case 1: $\elect(\ell) \in \sigma$.}
Let $q \neq \ell$ be a fresh identifier. Define
$\rho = \sigma \cup \{\elect(q)\}$.
Then $|\{p : \elect(p) \in \rho\}| \geq 2$, so
$\rho \notin \mathcal{L}_{\leader}$.

\textit{Case 2: $\elect(\ell) \notin \sigma$.}
Let $q_1, q_2$ be distinct fresh identifiers. Define
$\rho = \sigma \cup \{\elect(q_1),\,\elect(q_2)\}$.
Then $\rho$ has two leaders and $\rho \notin \mathcal{L}_{\leader}$.

In both cases, $\mathrm{Ext}_{\neg\mathcal{L}_{\leader}}(\sigma)
\neq \emptyset$.
\end{proof}

\subsection*{A.3\quad $k$-Set Agreement}

\begin{definition*}
$\mathcal{L}_{k\text{-set}} \;=\;
\bigl\{\,I \subseteq \mathcal{V} \mid
|\{v : \out(p,v) \in I\}| \leq k\,\bigr\}.$
\end{definition*}

\begin{proposition}\label{prop:kset-gd}
$\mathcal{L}_{k\text{-set}}$ is globally dependent.
\end{proposition}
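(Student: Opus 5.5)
I will follow the template of Propositions~\ref{prop:cons-gd} and~\ref{prop:leader-gd}: establish (A) by exhibiting the forbidden pattern, then prove (B) by padding any strict partial view with fresh output events until more than $k$ distinct values appear. For (A), the forbidden pattern is a set $B = \{\out(p_0,v_0),\dots,\out(p_k,v_k)\} \subseteq \sigma$ with $v_0,\dots,v_k$ pairwise distinct. Any $\rho \supseteq \sigma$ contains $B$, so $\rho$ has at least $k+1$ distinct output values. Hence $\rho \notin \mathcal{L}_{k\text{-set}}$, and $\mathrm{Ext}_{\mathcal{L}_{k\text{-set}}}(\sigma)=\emptyset$. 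This is the uniform argument already stated at the start of the appendix, so I only need to note that a violation is exactly the presence of such a $B$.

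For (B), fix $I \in \mathcal{L}_{k\text{-set}}$ and $\sigma \subsetneq I$. Let $V_\sigma = \{v : \out(p,v)\in\sigma\}$. Since $\sigma \subseteq I$, we have $|V_\sigma| \le k$. Set $m = k+1-|V_\sigma| \ge 1$. Because $\mathcal{V}$ is infinite, I choose $m$ distinct values $w_1,\dots,w_m \notin V_\sigma$ and $m$ distinct fresh process identifiers $q_1,\dots,q_m$ not occurring in $\sigma$. Then I define
\[
\rho = \sigma \cup \{\out(q_1,w_1),\dots,\out(q_m,w_m)\}.
\]
The set of values output in $\rho$ is $V_\sigma \cup \{w_1,\dots,w_m\}$, which has exactly $k+1$ elements. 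So $\rho \notin \mathcal{L}_{k\text{-set}}$ and $\rho \supseteq \sigma$, which gives $\mathrm{Ext}_{\neg\mathcal{L}_{k\text{-set}}}(\sigma)\neq\emptyset$.

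Unlike the consensus proof, this single construction covers all cases, including $V_\sigma=\emptyset$, where $m=k+1$. I will remark that the strictness $\sigma\subsetneq I$ is not even needed, since the bound $|V_\sigma|\le k$ is all that is used.

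The only step needing care is that enough fresh values and identifiers exist, and that adding events $\out(q_i,w_i)$ with fresh $q_i$ never clashes with well-formedness of $\sigma$ (for instance, a process outputting twice). Choosing identifiers disjoint from those in $\sigma$ handles this, exactly as in the consensus case. I do not expect any real obstacle.
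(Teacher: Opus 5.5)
Your proof is correct and follows the paper's own argument: the same upward-closed forbidden pattern of $k+1$ distinct output values for (A), and the same single padding construction $\rho = \sigma \cup \{\out(q_i,w_i)\}$ with $k+1-|V_\sigma|$ fresh processes and new values for (B). One small caveat, which the paper shares: the existence of $k+1$ distinct values is an implicit assumption on the value domain (with only $k$ values, $\mathcal{L}_{k\text{-set}}$ would be all of $2^{\mathcal{V}}$ and the proposition would fail), and it does not follow merely from $\mathcal{V}$ being infinite, since infinitely many events could differ only in their process identifiers.
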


\begin{proof}
\textbf{(A) Safety.}
The forbidden pattern is the presence of output events for $k+1$ distinct
values. This pattern is upward-closed, so
$\mathrm{Ext}_{\mathcal{L}_{k\text{-set}}}(\sigma) = \emptyset$ whenever
$\sigma$ already contains $k+1$ distinct output values.

\textbf{(B) Global dependence.}
Let $I \in \mathcal{L}_{k\text{-set}}$ and $\sigma \subsetneq I$.
Let $V_\sigma = \{v : \out(p,v) \in \sigma\}$, so $|V_\sigma| \leq k$.
Let $v_1, \ldots, v_{k+1-|V_\sigma|}$ be values not in $V_\sigma$ and
$p_1, \ldots, p_{k+1-|V_\sigma|}$ be fresh process identifiers. Define:
\[
    \rho \;=\; \sigma \;\cup\;
    \bigl\{\,\out(p_i,\,v_i) : 1 \leq i \leq k+1-|V_\sigma|\,\bigr\}.
\]
Then $|\{v : \out(p,v) \in \rho\}| = |V_\sigma| + (k+1-|V_\sigma|)
= k+1 > k$, so $\rho \notin \mathcal{L}_{k\text{-set}}$, and
$\mathrm{Ext}_{\neg\mathcal{L}_{k\text{-set}}}(\sigma) \neq \emptyset$.
\end{proof}

\subsection*{A.4\quad Linearizability}

Let $O$ be a concurrent object with sequential specification
$\mathsf{Spec}$.
The \emph{real-time order} is
$\op_a \prec_{\mathrm{rt}} \op_b$ iff $f_a < s_b$.
A \emph{valid linearization of $I$} is a total order $\pi$ on the
operations of $I$ that (i) extends $\prec_{\mathrm{rt}}$ and
(ii) produces a sequence of $(m_i, r_i)$ pairs that is legal
according to $\mathsf{Spec}$.
$I \in \mathcal{L}_{\lin}$ iff $I$ admits a valid linearization.

\begin{proposition}\label{prop:lin-gd}
$\mathcal{L}_{\lin}$ is globally dependent.
\end{proposition}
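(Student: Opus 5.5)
Following the template used for the earlier propositions in this appendix, the proof has two parts: (A) safety and (B) global dependence. For (A) I will use the uniform forbidden-pattern argument stated at the start of the appendix. A non-linearizable $\sigma$ contains a finite set $B$ of operations whose real-time constraints admit no legal sequential order. Any $\rho \supseteq \sigma$ keeps those operations and their intervals, so a linearization of $\rho$ would restrict to one of $B$, which is impossible. The restriction step is what makes this work: the restriction of a total order extending $\prec_{\mathrm{rt}}$ still extends $\prec_{\mathrm{rt}}$. I will assume $\mathsf{Spec}$ is prefix-closed, so the restriction also yields a legal sequence. This assumption should be stated explicitly.

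For (B), fix $I \in \mathcal{L}_{\lin}$ and $\sigma \subsetneq I$. I will build an illegal extension $\rho = \sigma \cup \{\op_1, \op_2\}$ by appending fresh operations on a fresh process $p^*$. Their intervals lie strictly after every finishing time $f$ appearing in $\sigma$, with $\op_1$ finishing before $\op_2$ starts. Real time then forces every linearization of $\rho$ to place $\op_1$ and then $\op_2$ after all operations of $\sigma$. The pair is chosen so that it has no legal continuation after \emph{any} history. For a register, a natural choice is $\op_1 = \op(p^*, \mathrm{write}(a), \mathrm{ok}, [s_1,f_1])$ followed by $\op_2 = \op(p^*, \mathrm{read}, b, [s_2,f_2])$ with $b \neq a$. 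No matter how $\sigma$ is linearized, the read must return $a$, so $\rho \notin \mathcal{L}_{\lin}$. This construction does not even depend on whether $\sigma$ itself is linearizable.

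The main obstacle is generality in $O$. For an arbitrary $\mathsf{Spec}$ there may be no such "poison pair". If every invocation is legal in every state with every return value, the language is trivially everything and global dependence fails. I would therefore add an explicit hypothesis: $\mathsf{Spec}$ is nontrivial in the sense that there exist a method $m$ and a return value $r$ such that $(m,r)$ is illegal immediately after some fixed finite prefix $\beta$ that can be forced from any state. The write-then-read example above is one instance. I would then set $\rho$ to be $\sigma$ followed by the sequential operations realizing $\beta$, then $(m,r)$, all on fresh processes with late, disjoint intervals. If forcing a reset from an arbitrary state proves awkward, the fallback is to state the proposition for the standard objects (registers, queues, stacks, counters). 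Each of these has an explicit poison pair: a write followed by a mismatched read, an enqueue into an empty structure handled similarly, and so on. The case analysis on whether $\sigma$ is empty is unnecessary, since appending works uniformly.

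Finally, I note that $\mathcal{V}$ is infinite, so fresh identifiers and times beyond $\max f$ always exist. This concludes $\mathrm{Ext}_{\neg\mathcal{L}_{\lin}}(\sigma) \neq \emptyset$.
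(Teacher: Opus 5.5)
Your part (B) is correct, and it takes a genuinely different route from the paper. The paper works from $I$. It takes $\op_{\miss}$, the first operation of $I\setminus\sigma$ in a linearization $\pi$ of $I$, lets $S$ and $S'$ be the states just before and after it, and appends one late operation whose response is consistent with $S'$ but not with $S$. It then argues that the state reached after all of $\sigma$ is still ``consistent with $S$''. You ignore $I$ and $\op_{\miss}$ altogether. You append a poison suffix (write $a$, then read $b\neq a$) strictly after $\max f$, and real-time order pins it to the end of every linearization of $\rho$. So illegality holds for every finite $\sigma$, linearizable or not, and even for $\sigma=I$. This is the same move the paper itself uses for Consensus and Leader Election (append fresh conflicting events).

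Your route is more robust, because the paper's identification of the post-$\sigma$ state with $S$ overlooks operations of $\sigma$ that follow $\op_{\miss}$ in $\pi$, as well as other linearizations of $\sigma$. Concretely, take a register initialized to $0$ and $I=\{\op(a,\mathrm{write}(1),\mathrm{ok},[0,1]),\ \op(b,\mathrm{write}(1),\mathrm{ok},[2,3])\}$, with $\sigma$ the second write. The paper's recipe can only append a read returning $1$, and the resulting $\rho$ is linearizable. When $\op_{\miss}$ is a pure read, $S=S'$ and no distinguishing pair exists at all.

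Your explicit nontriviality hypothesis on $\mathsf{Spec}$ is also the right call. The paper's tacit ``observability'' assumption is doing that work, and if every response is legal in every state, every set of operations is linearizable and the proposition fails. One refinement: you only need the finite sequence $\beta\cdot(m,r)$ to be illegal from every state, not that $\beta$ be legal everywhere and force a fixed state. This weaker condition covers queues via a $\mathrm{dequeue}$ returning $\bot$ followed by a $\mathrm{dequeue}$ returning some $b\neq\bot$; your ``enqueue into an empty structure'' fails from a nonempty queue. It also covers any object with a pure read, via two consecutive reads with different results.

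Your part (A), by contrast, is false as argued, though it is dispensable. Restricting a legal sequential history to a subset of its operations is not taking a prefix, so prefix-closure does not make the restriction legal. For example, restrict $\mathrm{write}(1),\mathrm{write}(2),\mathrm{read}\to 2$ to its first and last operations. In fact non-linearizability is not upward-closed. On a register initialized to $0$, $\{\op(p,\mathrm{read},2,[1,2])\}$ is not linearizable, yet adding the concurrent $\op(q,\mathrm{write}(2),\mathrm{ok},[0,3])$ makes it linearizable. The paper's own (A) has the same defect.

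None of this affects the proposition. Definition~\ref{def:gd} is exactly your (B). The formal safety statement $\mathrm{Ext}_{\mathcal{L}}(\sigma)=\emptyset\Rightarrow\mathrm{Ext}_{\mathcal{L}}(\sigma')=\emptyset$ for $\sigma'\supseteq\sigma$ follows directly from monotonicity of $\mathrm{Ext}_{\mathcal{L}}$ (Observation~\ref{obs:safety}). So simply drop the forbidden-pattern argument.
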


\begin{proof}
\textbf{(A) Safety.}
A violation is witnessed by a finite set of operations that either
(i) induce a cycle in the must-precede relation (derived from
$\prec_{\mathrm{rt}}$ and $\mathsf{Spec}$ response constraints), or
(ii) contain a return value inconsistent with every state reachable
under $\mathsf{Spec}$.
Both conditions are upward-closed: any $\rho \supseteq \sigma$
inherits the same cycle or the same inconsistent response.

\textbf{(B) Global dependence.}
Let $I \in \mathcal{L}_{\lin}$ with valid linearization $\pi$, and let
$\sigma \subsetneq I$. Let $\op_{\miss} \in I \setminus \sigma$ be the
\emph{first} element of $I \setminus \sigma$ in $\pi$-order. By
choice of $\op_{\miss}$, every operation that precedes $\op_{\miss}$
in $\pi$ is already in $\sigma$; let $S$ be the object state immediately
before $\op_{\miss}$ in $\pi$, computed by applying those
$\sigma$-operations in order.

Let $S' = \mathsf{Spec}(S, \op_{\miss})$ be the state after applying
$\op_{\miss}$ to $S$. Since $\op_{\miss}$ is observable (it either changes
state or reads from it), there exists a method $m^*$ and return value
$r^*$ such that:
\[
    (m^*, r^*) \text{ is consistent with } S'
    \quad\text{but inconsistent with } S.
\]
Let $f_{\sigma} = \max\{f : \op(p,m,r,[s,f]) \in \sigma\}$ and pick any
interval $[s^*, f^*]$ with $s^* > f_{\sigma}$. Define a fresh operation:
\[
    \op_{\opfresh} \;=\; \op(p^*,\, m^*,\, r^*,\, [s^*, f^*]),
    \qquad p^* \text{ fresh},
\]
and let $\rho = \sigma \cup \{\op_{\opfresh}\}$.

In any valid linearization $\pi'$ of $\rho$: since $s^* > f_{\sigma}$,
real-time order requires $\op_{\opfresh}$ to be placed after all operations of
$\sigma$ in $\pi'$. The object state at $\op_{\opfresh}$'s position in $\pi'$
is therefore $S_\sigma$, the state after all of $\sigma$'s operations.
Since $\op_{\miss} \notin \rho$, the transition from $S$ to $S'$ never
occurs in $\rho$; the state at $\op_{\opfresh}$'s position remains
consistent with $S$ (not $S'$). But $r^*$ is inconsistent with any
such state, so $\op_{\opfresh}$ has no valid position in $\pi'$.
Therefore $\rho \notin \mathcal{L}_{\lin}$, and
$\mathrm{Ext}_{\neg\mathcal{L}_{\lin}}(\sigma) \neq \emptyset$.
\end{proof}

\begin{remark}
The construction places $\op_{\opfresh}$ \emph{after} all of $\sigma$ to
avoid interference with concurrent operations already in $\sigma$.
If $\op_{\miss}$ is itself a read (state-preserving), the argument applies
with $m^*$ chosen as a write whose effect distinguishes states $S$ and
$S'$ in a follow-up read; the same interval construction carries through.
\end{remark}

\subsection*{A.5\quad Sequential Consistency}

Let $\prec_p$ denote the \emph{local program order} of process $p$:
$\op_a \prec_p \op_b$ iff $p$ invokes $a$ before $b$ in its local history.
$I \in \mathcal{L}_{\seqcon}$ iff there exists a total order $\pi$ on
operations in $I$ that (i) extends $\prec_p$ for every $p$, and
(ii) is legal according to $\mathsf{Spec}$.

\begin{proposition}\label{prop:sc-gd}
$\mathcal{L}_{\seqcon}$ is globally dependent.
\end{proposition}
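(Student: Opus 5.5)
The plan follows the same two-part template as Propositions~\ref{prop:cons-gd}--\ref{prop:lin-gd}: establish (A) Safety, then (B) Global dependence. Part~(A) should be essentially free. A violation of sequential consistency is witnessed by a finite set of operations whose program-order constraints $\prec_p$, together with the $\mathsf{Spec}$ response constraints, admit no legal total order. That obstruction is inherited by every $\rho \supseteq \sigma$: adding operations only adds constraints, and any legal order of $\rho$ restricted to the operations of $\sigma$ would give a legal order of $\sigma$. I will state this briefly and rely on the uniform upward-closure argument given at the start of the appendix.

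For (B), I will adapt the construction of Proposition~\ref{prop:lin-gd}. Two changes are needed, because sequential consistency ignores real-time order and only respects per-process program order:
\begin{enumerate}
\item I cannot use a late real-time interval to force $\op_{\opfresh}$ after all of $\sigma$, since the linearizer may move it.
\item The fresh operation must therefore be pinned semantically rather than temporally.
\end{enumerate}

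Let $I \in \mathcal{L}_{\seqcon}$ with witness order $\pi$, and let $\sigma \subsetneq I$. The cleanest choice is to make the added pattern contradictory by itself, independently of where it sits in $\pi'$. Pick a fresh process $p^*$ and give it a short local history of operations whose responses are inconsistent under $\mathsf{Spec}$ in program order. For a register, this could be a write of a fresh value $a$ followed in $\prec_{p^*}$ by a read returning a different fresh value $b$ that nobody writes. Since $b$ never appears as a written value in $\rho = \sigma \cup \{\text{these ops}\}$, no total order makes the read legal. Hence $\rho \notin \mathcal{L}_{\seqcon}$ and $\mathrm{Ext}_{\neg\mathcal{L}_{\seqcon}}(\sigma) \neq \emptyset$.

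I will present this in the generic $\mathsf{Spec}$ setting as follows. Since $\mathcal{V}$ is infinite, fresh identifiers and fresh response values are available. I use the fact that the object is nontrivial, meaning some single operation $\op(p^*,m^*,r^*)$ exists whose response $r^*$ is unreachable from any state generated by the operations of $\sigma$ plus itself. A fresh value that is never written serves this purpose.

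The main obstacle is the generality of $\mathsf{Spec}$. For an arbitrary sequential specification, the existence of a response that is unreachable regardless of placement needs an assumption, such as the object admitting reads whose responses are values that must have been written. If that fails, I would fall back to the linearizability-style argument. Take the first missing operation $\op_{\miss}$ in $\pi$, and attach the fresh distinguishing operations to the same process as the last operation of $\sigma$ via program order, so that $\prec_p$ rather than $\prec_{\mathrm{rt}}$ pins their position. I would note that the remark following Proposition~\ref{prop:lin-gd} covers the state-preserving case analogously.
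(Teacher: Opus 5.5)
Your part (B) is correct and takes a genuinely different route from the paper. The paper mirrors its linearizability proof. It takes the first missing operation $\op_{\miss}$ in $\pi$, with pre-state $S$ and post-state $S'$. It gives a fresh process $p^*$ a single operation whose response is consistent with $S'$ but not with $S$. It then argues that in any program-order-respecting $\pi'$ this operation must come after the $\pi$-prefix preceding $\op_{\miss}$, where the state is $S$.

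You instead add a pattern that is illegal wherever it is placed: a read returning a fresh value $b$ that is neither written in $\rho$ nor the initial value. So you never need to control where the new operation sits or how $\pi'$ orders $\sigma$. Your worry about pinning is well founded. A fresh process with one operation imposes no program-order constraint relative to $\sigma$, and $\pi'$ need not agree with $\pi$ on $\sigma$, so the paper's pinning step does not go through in general. For example, take a register with initial value $0$, $I=\{w_p(1),w_q(2),w_r(1)\}$ with $\pi$ in that order, and $\sigma=\{w_q(2),w_r(1)\}$. Then $S=0$ and $S'=1$, the only admissible new operation is $\mathrm{read}\to 1$, and $\sigma\cup\{\mathrm{read}\to 1\}$ is sequentially consistent via $w_q(2),w_r(1),\mathrm{read}\to 1$.

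Your construction avoids this, and in fact shows that every finite $\sigma$ has an illegal extension, which is all Definition~\ref{def:gd} requires. The paper's route, had it worked, would give an extension reflecting the effect of the actual missing operation. Yours costs an explicit nontriviality hypothesis on $\mathsf{Spec}$. That cost is unavoidable, and the paper pays it implicitly through its ``observability'' assumption: for an object whose operations always return a fixed acknowledgment, every history is sequentially consistent and the proposition fails. Your fallback, attaching the new operations to the process of ``the last operation of $\sigma$'', has the same defect as the paper's pinning. Program order constrains only that one process, and there is no global last operation without real time. Present only the placement-independent argument.

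One correction to part (A), which the proposition does not actually need, since Definition~\ref{def:gd} is condition (B) alone. It is not true that a legal order of $\rho$ restricted to $\sigma$ is a legal order of $\sigma$: the restriction can delete the write that a read depends on. Indeed $\{\mathrm{read}\to 1\}$ is illegal while $\{w(1),\mathrm{read}\to 1\}$ is legal, so illegality of $\sigma$ is not inherited by its supersets in $\mathcal{L}_{\seqcon}$. The formal statement of (A) in the appendix is only the monotonicity of $\mathrm{Ext}_{\mathcal{L}}$ from Observation~\ref{obs:safety}. That holds for every language and needs no argument about orders.
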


\begin{proof}
\textbf{(A) Safety.} A sequential consistency violation induces a cycle
in the must-precede relation derived from program orders and
$\mathsf{Spec}$ constraints. Adding operations to $\sigma$ cannot
remove program-order or specification constraints already imposed by
$\sigma$, so the cycle persists in any $\rho \supseteq \sigma$.

\textbf{(B) Global dependence.}
The argument is structurally identical to Proposition~\ref{prop:lin-gd},
with $\prec_{\mathrm{rt}}$ replaced by program order $\prec_p$.
Let $I \in \mathcal{L}_{\seqcon}$ with valid total order $\pi$ and
$\sigma \subsetneq I$. Let $\op_{\miss} \in I \setminus \sigma$ be the
first element of $I \setminus \sigma$ in $\pi$-order, with pre-state $S$
and post-state $S'$.

Assign $\op_{\opfresh} = \op(p^*, m^*, r^*, \,\cdot\,)$ to a fresh process
$p^*$ with empty program history in $\sigma$, with $r^*$ consistent with
$S'$ but inconsistent with $S$, and place $\op_{\opfresh}$ as the unique and
last operation of $p^*$'s local program order. In any total order $\pi'$
of $\rho = \sigma \cup \{\op_{\opfresh}\}$ consistent with program orders,
$\op_{\opfresh}$ can only be placed after all operations that precede
$\op_{\miss}$ in $\pi$ (these are all in $\sigma$, by choice of
$\op_{\miss}$). The state at that position is $S$ (not $S'$), so $r^*$
is inconsistent and $\rho \notin \mathcal{L}_{\seqcon}$.

Hence $\mathrm{Ext}_{\neg\mathcal{L}_{\seqcon}}(\sigma) \neq \emptyset$.
\end{proof}


\section{Full Proof: Byzantine Indistinguishability}
\label{app:indist}

\begin{proof}[Proof of Lemma~\ref{lem:indist}]
Let $\mathcal{L}$ be any globally dependent language and fix any
subset $K \subseteq \mathcal{M}$ with $|K| = c \le t$. We show
that no sound deterministic client decision rule exists for
either verdict type.

\medskip
\noindent\textbf{Case 1: $Y$ verdicts.}
\begin{itemize}
    \item \textbf{Scenario $\alpha_1$ (legal).} The target
          execution is $I_{\alpha_1} \in \mathcal{L}$. All
          processes in $K$ are correct, have reconstructed
          $\sigma_k = I_{\alpha_1}$, and truthfully emit $Y$.
          All other correct processes emit $?$ or $N$.
    \item \textbf{Scenario $\beta_1$ (illegal).} The target
          execution is $I_{\beta_1} \notin \mathcal{L}$.
          All processes in $K$ are Byzantine and coordinate to
          emit $Y$ regardless of their local views. The remaining
          $n-t$ correct processes observe violations and emit $N$
          or $?$.
\end{itemize}
In both scenarios the client receives a $Y$ report from every
process in $K$ and no $Y$ report from any process outside $K$.
Since $|K| = c \le t$, all members of $K$ can be Byzantine in
$\beta_1$ without exceeding the adversarial budget. The client's
input is therefore identical in $\alpha_1$ and $\beta_1$.
By determinism, any client that emits a global $Y$ in $\alpha_1$
must also emit $Y$ in $\beta_1$, which violates soundness because
$I_{\beta_1} \notin \mathcal{L}$.

\medskip
\noindent\textbf{Case 2: $N$ verdicts.}
\begin{itemize}
    \item \textbf{Scenario $\alpha_2$ (illegal).} The target
          execution is $I_{\alpha_2} \notin \mathcal{L}$. The
          processes in $K$ are correct, detect a violation, and
          truthfully emit $N$. All other correct processes emit
          $Y$ or $?$.
    \item \textbf{Scenario $\beta_2$ (legal).} The target
          execution is $I_{\beta_2} \in \mathcal{L}$. The
          processes in $K$ are Byzantine and coordinate to emit
          $N$ to force a false negative. The remaining $n-t$
          correct processes find the execution valid and emit $Y$
          or $?$.
\end{itemize}
By the same reasoning, the client receives an $N$ report from
exactly $K$ in both scenarios and cannot distinguish them. Any
client that emits a global $N$ in $\alpha_2$ must also emit $N$
in $\beta_2$, violating soundness because $I_{\beta_2} \in
\mathcal{L}$.

\medskip
Since both verdicts fail for any $K$ with $|K| \le t$, a sound
certificate requires $|K| \ge t+1$. This is sufficient by the
pigeonhole principle: with $|F| \le t$, a quorum of size $t+1$
must contain at least one correct process, anchoring the verdict
in at least one honest evaluation. \qed
\end{proof}
\end{document}